\newcommand{\ck}{\mathsf{C}}
\newcommand{\ek}{\mathsf{E}}
\newcommand{\proves}{\vdash}
\newcommand{\numberofagents}{h}
\newcommand{\agents}{i}
\newcommand{\agentsC}{*}
\newcommand{\agentsEC}{\circledast}
\newcommand{\N}{\mathbb{N}}
\newcommand{\jbox}[1]{\left[#1\right]\!}
\newcommand{\tupling}[1]{\left\langle #1 \right\rangle}
\newcommand{\projection}{\pi}
\newcommand{\ind}[1]{\mathsf{ind}(#1)}
\newcommand{\ccl}{\mathsf{ccl}}
\newcommand{\cclhead}[1]{\ccl_1(#1)}
\newcommand{\ccltail}[1]{\ccl_2(#1)}
\newcommand{\conversion}{\downarrow\!}
\newcommand{\shift}{\Lleftarrow}
\newcommand{\pconstants}{\textnormal{Cons}}
\newcommand{\pvariables}{\textnormal{Var}}
\newcommand{\propositions}{\textnormal{Prop}}
\newcommand{\eterms}{\textnormal{Tm}}
\newcommand{\LP}{\mathsf{LP}}
\newcommand{\LPh}{\LP_\numberofagents}
\newcommand{\LPhEC}{\LPh^{\ck}}
\newcommand{\SFour}{\mathsf{S4}}
\newcommand{\SFourh}{\SFour_\numberofagents}
\newcommand{\SFourhEC}{\SFourh^{\ck}}
\newcommand{\formulae}{\textnormal{Fm}}
\newcommand{\formulaeLPh}{\formulae_{\LPh}}
\newcommand{\formulaeLPhEC}{\formulae_{\LPhEC}}
\newcommand{\formulaeSFourhEC}{\formulae_{\SFourhEC}}
\newcommand{\CS}{\mathcal{CS}}
\newcommand{\hilbertrule}[2]{\displaystyle{\frac{#1}{#2}}}
\newcommand{\evidence}{\mathcal{E}}
\newcommand{\valuation}{\nu}
\newcommand{\powerset}{\mathcal{P}}
\newcommand{\forgetful}{\circ}
\newcommand{\delivered}{\text{\emph{del}}}
\newcommand{\attack}{\text{\emph{att}}}
\newcommand{\dashspace}{\mathrel{\mid}}
\newcommand{\tsm}{\mathcal{M}}
\newcommand{\tsval}{\valuation}
\newcommand{\tsd}{\delivered}
\newcommand{\tse}{\evidence}
\newcommand{\tsecan}{\evidence^{\text{can}}}
\newcommand{\thanksSNF}{\thanks{The first and second authors are supported by Swiss National Science Foundation grant~200021--117699.}}
\newcommand{\Case}{\noindent\textbf{Case}\ }
\title{%
Explicit Evidence Systems \\
with Common Knowledge}
\author{%
Samuel Bucheli,
Roman Kuznets,\thanksSNF\
and
Thomas Studer}
\institute{%
Institut f{\"u}r Informatik und angewandte Mathematik, Universit{\"a}t Bern \\
Bern, Switzerland \\
\email{$\{$ \href{mailto:bucheli@iam.unibe.ch}{bucheli}}, \href{mailto:kuznets@iam.unibe.ch}{kuznets}, \href{mailto:tstuder@iam.unibe.ch}{tstuder} $\}$@iam.unibe.ch}
\begin{document}

\maketitle

\phantomsection
\addcontentsline{toc}{section}{Abstract}
\begin{abstract}
Justification logics are epistemic logics that explicitly include justifications for the agents' knowledge.
We develop a multi-agent justification logic with evidence terms for individual agents as well as for common knowledge.
We define a Kripke-style semantics that is similar to Fitting's semantics for the Logic of Proofs~$\LP$.
We show the soundness, completeness, and finite model property of our multi-agent justification logic with respect to this Kripke-style semantics.
We demonstrate that our logic is a conservative extension of Yavorskaya's minimal bimodal explicit evidence logic, which is a two-agent version of~$\LP$.
We discuss the relationship of our logic to the multi-agent modal logic~$\SFour$ with common knowledge.
Finally, we give a brief analysis of the coordinated attack problem in the newly developed language of our logic.
\end{abstract}
	
\section{Introduction}
\label{sec:intro}
	
\emph{Justification logics}~\cite{Art08RSL} are epistemic logics that explicitly include justifications for the agents' knowledge.
The first logic of this kind, the \emph{Logic of Proofs~$\LP$}, was developed by Artemov~\cite{Art95TR,Art01BSL} to provide the modal logic~$\SFour$ with provability semantics.
The language of justification logics has also been used to create a new approach to the logical omniscience problem~\cite{ArtKuz09TARK} and to study self-referential proofs~\cite{Kuz10TOCSnonote}.

Instead of statements \emph{$A$~is known}, denoted~$\Box A$, justification logics reason about justifications for knowledge by using the construct~$\jbox{t}A$ to formalize statements \emph{$t$~is a justification for~$A$}, where \emph{evidence term~$t$} can be viewed as an informal justification or a formal mathematical proof depending on the application.
Evidence terms are built by means of operations that correspond to the axioms of~$\SFour$, as is illustrated in Fig.~\ref{fig:LPaxioms}.
\begin{figure}
\begin{tabular}{l@{\qquad}l@{\quad}r}
    \multicolumn{1}{c}{\textbf{$\SFour$~axioms}}                & \multicolumn{2}{c}{\textbf{$\LP$~axioms}} \\
    $\Box(A \rightarrow B)\rightarrow(\Box A\rightarrow\Box B)$ & $\jbox{t}(A\rightarrow B)\rightarrow(\jbox{s} A\rightarrow\jbox{t\cdot s}B)$ & (application) \\
    $\Box A \rightarrow A$                                      & $\jbox{t} A \rightarrow A$                                                   & (reflexivity) \\
    $\Box A \rightarrow \Box\Box A$                             & $\jbox{t} A \rightarrow \jbox{!t}\jbox{t} A$                                 & (inspection) \\
                                                                & $\jbox{t} A \vee \jbox{s} A \rightarrow \jbox{t+s} A$                        & (sum)
\end{tabular}
\caption{Axioms of~$\SFour$ and~$\LP$}
\label{fig:LPaxioms}
\end{figure}

Artemov~\cite{Art01BSL} has shown that the Logic of Proofs~$\LP$ is an \emph{explicit counterpart} of the modal logic~$\SFour$ in the following formal sense: each theorem of~$\LP$ becomes a theorem of~$\SFour$ if all terms are replaced with the modality~$\Box$; and, vice versa, each theorem of~$\SFour$ can be transformed into a theorem of~$\LP$ if occurrences of modality are replaced with suitable evidence terms.
The latter process is called \emph{realization}, and the statement of correspondence is called a \emph{realization theorem}.
Note that the operation~$+$ introduced by the sum axiom in Fig.~\ref{fig:LPaxioms} does not have a modal analog, but it is an essential part of the proof of the realization theorem in~\cite{Art01BSL}.
Explicit counterparts for many normal modal logics between~$\textsf{K}$ and~$\textsf{S5}$ have been developed (see a recent survey in~\cite{Art08RSL} and a uniform proof of realization theorems for all single-agent justification logics forthcoming in~\cite{bgk10}).

The notion of \emph{common knowledge} is essential in the area of multi-agent systems, where coordination among agents is a central issue.
The standard textbooks~\cite{FHMV95,HM95} provide excellent introductions to epistemic logics in general and common knowledge in particular.
Informally, common knowledge of~$A$ is defined as the infinitary conjunction \emph{everybody knows~$A$ and everybody knows that everybody knows~$A$ and so on}.
This is equivalent to saying that common knowledge of~$A$ is the greatest fixed point of
\begin{equation}
\label{eq:operator}
    \lambda X.(\text{everybody knows~$A$ and everybody knows~$X$})
    \rlap{\enspace.}
\end{equation}

Artemov~\cite{Art06TCS} has created an explicit counterpart of McCarthy's \emph{any fool knows} common knowledge modality~\cite{McCarSatHayIga78TR}, where common knowledge of~$A$ is defined as an arbitrary fixed point of~\eqref{eq:operator}.
The relationship between the traditional common knowledge from~\cite{FHMV95,HM95} and McCarthy's version is studied in~\cite{Ant07LFCS}.\looseness=-1

In this paper, we present a multi-agent justification logic with evidence terms for individual agents as well as for common knowledge, with the intention to provide an explicit counterpart of the $h$-agent modal logic of traditional common knowledge~$\SFour_\numberofagents^\ck$.

Multi-agent justification logics with evidence terms for each agent have been considered in~\cite{TYav08TOCSnonote,Ren09TARK,Art10TR}, although common knowledge is not present in any of them.
Artemov's interest~\cite{Art10TR} lies mostly in exploring a case of two agents with unequal epistemic powers, e.g.,~Artemov's Observer has sufficient evidence to reproduce his Object Agent's thinking, but not vice versa.
Yavorskaya~\cite{TYav08TOCSnonote} studies various operations of evidence transfer between agents.
Among their systems, Yavorskaya's minimal\footnote{Minimality here is understood in the sense of the minimal transfer of evidence.} bimodal explicit evidence logic, which is an explicit counterpart of~$\SFour_2$, is the closest to our system.
We will show that in the case of two agents our system is its conservative extension.
Finally, Renne's system~\cite{Ren09TARK} combines features of modal and dynamic epistemic logics, and hence cannot be directly compared to our system.

An epistemic semantics for~$\LP$, \emph{F-models}, was created by Fitting in~\cite{Fit05APAL} by augmenting Kripke models with an \emph{evidence function} that specifies which formulae are evidenced by a term at a given world.
It is easily extended to the whole family of single-agent justification logics (for details, see~\cite{Art08RSL}).
In~\cite{Art06TCS} Artemov extends F-models to justification terms for McCarthy's common knowledge modality in the presence of several ordinary modalities, creating the most general type of epistemic models, sometimes called \emph{AF-models}, where common evidence terms are given their own accessibility relation not directly dependent on the accessibility relations for individual modalities.
Yavorskaya in~\cite{TYav08TOCSnonote} proves a stronger completeness theorem with respect to singleton F-models, independently introduced by Mkrtychev~\cite{Mkr97LFCS} and now known as \emph{M-models}, where the role of the accessibility relation is completely taken over by the evidence function.

The paper is organized as follows.
In Sect.~\ref{sect:synt}, we introduce the language and give the axiomatization of a family of multi-agent justification logics with common knowledge.
In Sect.~\ref{sect:basprop}, we prove their basic properties including the internalization property, which is characteristic of all justification logics.
In Sect.~\ref{sect:soundcomp}, we give a Fitting-style semantics similar to AF-models and prove soundness and completeness with respect to this semantics as well as with respect to singleton models, thereby demonstrating the finite model property.
In Sect.~\ref{sect:conserv}, we show that for the two-agent case, our logic is a conservative extension of Yavorskaya's minimal bimodal explicit evidence logic.
In Sect.~\ref{sect:real}, we show how our logic is related to the modal logic of traditional common knowledge and discuss the problem of realization.
Finally, in Sect.~\ref{s:attack:1}, we provide an analysis of the coordinated attack problem in our logic.
	
\section{Syntax}
\label{sect:synt}

To create an explicit counterpart of the modal logic of common knowledge~$\SFourhEC$, we use its axiomatization via the induction axiom from~\cite{HM95} rather than via the induction rule to facilitate the proof of the internalization property for the resulting justification logic.
We supply each agent with its own copy of terms from the Logic of Proofs, while terms for common and mutual knowledge employ additional operations.
As motivated in~\cite{BucKuzStu09M4M}, a proof of~$\ck A$ can be thought of as an infinite list of proofs of the conjuncts~$\ek^m A$ in the representation of common knowledge through an infinite conjunction.
To generate a finite representation of this infinite list, we use an explicit counterpart of the induction axiom
\[
    A \wedge \jbox{t}_\ck (A \rightarrow \jbox{s}_\ek A) \rightarrow \jbox{\ind{t,s}}_\ck A
\]
with a binary operation~$\ind{\cdot,\cdot}$.
To access the elements of the list, explicit counterparts of the co-closure axiom provide evidence terms that can be seen as splitting the infinite list into its head and tail,
\[
    \jbox{t}_\ck A \rightarrow \jbox{\cclhead{t}}_\ek  A
    \rlap{\enspace,}
    \qquad
    \jbox{t}_\ck A \rightarrow \jbox{\ccltail{t}}_\ek \jbox{t}_\ck A
    \rlap{\enspace,}
\]
by means of two unary co-closure operations~$\cclhead{\cdot}$ and~$\ccltail{\cdot}$.
Evidence terms for mutual knowledge are represented as tuples of the individual agents' evidence terms with the standard operation of tupling and with $h$~unary projections.
While only two of the three operations on~$\LP$ terms are adopted for common knowledge evidence and none for mutual knowledge evidence, it will be shown in Sect.~\ref{sect:basprop} that most remaining operations are definable with the notable exception of inspection for mutual knowledge.

We consider a system of $\numberofagents$~agents.
Throughout the paper, $\agents$~always denotes an element of
$\{ 1, \dots, \numberofagents \}$,
$\agentsC$~always denotes an element of
$\{ 1, \dots, \numberofagents, \ck \}$,
and $\agentsEC$~always denotes an element of
$\{ 1, \dots, \numberofagents, \ek, \ck \}$.
	
Let
$\pconstants_\agentsEC \colonequals \{ c^\agentsEC_1, c^\agentsEC_2, \dots \}$
and
$\pvariables_\agentsEC \colonequals \{ x^\agentsEC_1, x^\agentsEC_2, \dots \}$
be countable sets of \emph{proof constants} and \emph{proof variables} respectively for each~$\agentsEC$.
The sets
$\eterms_1$, \dots, $\eterms_\numberofagents$, $\eterms_\ek$, and~$\eterms_\ck$
of \emph{evidence terms for individual agents} and for \emph{mutual} and \emph{common knowledge} respectively are inductively defined as follows:
\begin{enumerate}[topsep=0ex]
\item
    $\pconstants_\agentsEC \subseteq \eterms_\agentsEC$;
\item
    $\pvariables_\agentsEC \subseteq \eterms_\agentsEC$;
\item
    $!_\agents t \in \eterms_\agents$ for any $t \in \eterms_\agents$;
\item
    $t +_\agentsC s \in \eterms_\agentsC$ and $t \cdot_\agentsC s \in \eterms_\agentsC$ for any $t, s \in \eterms_\agentsC$;
\item
    $\tupling{t_1, \dots, t_\numberofagents} \in \eterms_\ek$ for any $t_1 \in \eterms_1$, \dots, $t_\numberofagents \in \eterms_\numberofagents$;
\item
    $\projection_it \in \eterms_i$ for any $t \in \eterms_\ek$;
\item
    $\cclhead{t} \in \eterms_\ek$ and $\ccltail{t} \in \eterms_\ek$ for any $t \in \eterms_\ck$;
\item
    $\ind{t,s} \in \eterms_\ck$ for any $t \in \eterms_\ck$ and any $s \in \eterms_\ek$.
\end{enumerate}
$\eterms \colonequals \eterms_1 \cup \dots \cup \eterms_\numberofagents \cup \eterms_\ek \cup \eterms_\ck$
denotes the set of all evidence terms.
The	indices of the operations~$!$, $+$, and~$\cdot$ will usually be omitted if they can be inferred from the context.
	
Let
$\propositions \colonequals \{ P_1, P_2, \dots \}$
be a countable set of \emph{propositional variables}.
\emph{Formulae} are denoted by~$A$, $B$, $C$,~etc. and defined by the following grammar
\begin{equation*}
    A \coloncolonequals P_j \dashspace \neg A \dashspace (A \wedge A) \dashspace (A \vee A) \dashspace (A \rightarrow A) \dashspace \jbox{t}_\agentsEC A
    \rlap{\enspace,}
\end{equation*}
where $t \in \eterms_\agentsEC$.
The set of all formulae is denoted by~$\formulaeLPhEC$.
We adopt the following convention: whenever a formula~$\jbox{t}_\agentsEC A$ is used, it is assumed to be well-formed, i.e.,~it is implicitly assumed that term
$t \in \eterms_\agentsEC$.
This enables us to omit the explicit typification of terms.
\medskip
	
\noindent
\textbf{Axioms of~$\LPhEC$:}
\begin{enumerate}[topsep=0ex]
\item
    all propositional tautologies
\item
    $\jbox{t}_\agentsC (A \rightarrow B) \rightarrow (\jbox{s}_\agentsC A \rightarrow \jbox{t \cdot s}_\agentsC B)$                                  \hfill (application)
\item
    $\jbox{t}_\agentsC A \rightarrow \jbox{t + s}_\agentsC A$, \quad $\jbox{s}_\agentsC A \rightarrow \jbox{t + s}_\agentsC A$                       \hfill (sum)
\item
    $\jbox{t}_\agents A \rightarrow A$                                                                                                               \hfill (reflexivity)
\item
    $\jbox{t}_\agents A \rightarrow \jbox{! t}_\agents \jbox{t}_\agents A$                                                                           \hfill (inspection)
\item
    $\jbox{t_1}_1 A \wedge \dots \wedge \jbox{t_\numberofagents}_\numberofagents A \rightarrow \jbox{\tupling{t_1, \dots, t_\numberofagents}}_\ek A$ \hfill (tupling)
\item
    $\jbox{t}_\ek A \rightarrow \jbox{\projection_i t}_i A$                                                                                          \hfill (projection)
\item
    $\jbox{t}_\ck A \rightarrow \jbox{\cclhead{t}}_\ek A$, \quad $\jbox{t}_\ck A \rightarrow \jbox{\ccltail{t}}_\ek \jbox{t}_\ck A$                  \hfill (co-closure)
\item
    $A \wedge \jbox{t}_\ck (A \rightarrow \jbox{s}_\ek A) \rightarrow \jbox{\ind{t,s}}_\ck A$                                                        \hfill (induction)
\end{enumerate}
\medskip

A \emph{constant specification}~$\CS$ is any subset
\[
    \CS \subseteq \bigcup_{\agentsEC \in \{ 1, \dots, \numberofagents, E, C \}} \left\{ \jbox{c}_\agentsEC A \; : \; c \in \pconstants_\agentsEC \text{ and } A \text{ is an axiom of~$\LPhEC$} \right\}
    \rlap{\enspace.}
\]
A constant specification~$\CS$ is called  \emph{$\ck$-axiomatically appropriate} if for each axiom~$A$, there is a proof constant
$c \in \pconstants_\ck$
such that
$\jbox{c}_\ck A \in \CS$.
A constant specification~$\CS$ is called \emph{pure}, if
$\CS \subseteq \left\{ \jbox{c}_\agentsEC A \; : \; c \in \pconstants_\agentsEC \text{ and } A \text{ is an axiom} \right\}$
for some fixed~$\agentsEC$, i.e.,~if for all
$\jbox{c}_\agentsEC A \in \CS$,
the constants~$c$ are of the same type.
	
Let $\CS$~be a constant specification.
The deductive system~$\LPhEC(\CS)$ is the Hilbert system given by the axioms of~$\LPhEC$ above and rules modus ponens and axiom necessitation:
\begin{equation*}
    \hilbertrule{A \quad A \rightarrow B}{B} \enspace, \qquad \qquad \hilbertrule{}{\jbox{c}_\agentsEC A} \enspace \text{, where $\jbox{c}_\agentsEC A \in \CS$.}
\end{equation*}
By~$\LPhEC$ we denote the system~$\LPhEC(\CS)$ with
\begin{equation}
\label{eq:maxCS}
    \CS = \left\{ \jbox{c}_\ck A \; : \; c \in \pconstants_\ck \text{ and } A \text{ is an axiom of~$\LPhEC$} \right\}
    \rlap{\enspace.}
\end{equation}

For an arbitrary~$\CS$, we write
$\Delta \proves_\CS A$
to state that $A$~is derivable from~$\Delta$ in~$\LPhEC(\CS)$ and omit the mention of~$\CS$ when working with the constant specification from~\eqref{eq:maxCS} by writing
$\Delta \proves A$.
We use $\Delta, A$ to mean $\Delta \cup \{ A \}$.
	
\section{Basic Properties}
\label{sect:basprop}
	
In this section, we show that our logic possesses the standard properties expected of any justification logic.
In addition, we show that the operations on terms introduced in the previous section are sufficient to express the operations of sum and application for mutual knowledge evidence and the operation of inspection for common knowledge evidence.
This is the reason why~$+_\ek$, $\cdot_\ek$, and~$!_\ck$ are not primitive connectives in the language.
It should be noted that no inspection operation for mutual evidence terms can be defined, which follows from Lemma~\ref{l:fp:1} in Sect.~\ref{sect:real} and the fact that
$\ek A \rightarrow \ek \ek A$
is not a valid modal formula.

We begin with the following observation:

\begin{lemma}
\label{lem:basicproperties1}
For any constant specification~$\CS$ and any formulae~$A$ and~$B$:
\begin{enumerate}[leftmargin=*,topsep=0ex]
\item
\label{lem:basicproperties1:Erefl}
    $\proves_\CS \jbox{t}_\ek A \rightarrow A$
    for all
    $t \in \eterms_\ek$;
    \hfill $(\ek \text{-reflexivity})$
\item
    for any
    $t, s \in \eterms_\ek$,
    there is a term $t \cdot_\ek s \in \eterms_\ek$
    such that \\
    $\proves_\CS \jbox{t}_\ek (A \rightarrow B) \rightarrow (\jbox{s}_\ek A \rightarrow \jbox{t \cdot_\ek s}_\ek B)$;
    \hfill $(\ek \text{-application})$
\item
    for any
    $t, s \in \eterms_\ek$,
    there is a term $t +_\ek s \in \eterms_\ek$
    such that \\
    $\proves_\CS \jbox{t}_\ek A \rightarrow \jbox{t +_\ek s}_\ek A$
    \quad and \quad
    $\proves_\CS \jbox{s}_\ek A \rightarrow \jbox{t +_\ek s}_\ek A$;
    \hfill $(\ek \text{-sum})$
\item
\label{i-conversion}
    for any
    $t \in \eterms_\ck$
    and any
    $i \in \{ 1, \dots, \numberofagents \}$,
    there is a term
    $\conversion_i t \in \eterms_i$
    such that \\
    $\proves_\CS \jbox{t}_\ck A \rightarrow \jbox{\conversion_i t}_i A$;
    \hfill $(i \text{-conversion})$
\item
    $\proves_\CS \jbox{t}_\ck A \rightarrow A$
    for all
    $t \in \eterms_\ck$.
    \hfill $(\ck \text{-reflexivity})$
\end{enumerate}
\end{lemma}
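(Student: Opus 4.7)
The plan is to derive each part by straightforward composition of the existing axioms, using projection/co-closure to reduce mutual and common evidence to individual-agent evidence, and then (for the definable operations) repackaging via tupling.

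For part~\ref{lem:basicproperties1:Erefl} ($\ek$-reflexivity), I would pick any agent $i$ and chain projection with $i$-reflexivity: $\jbox{t}_\ek A \rightarrow \jbox{\projection_i t}_i A \rightarrow A$. For part~(5) ($\ck$-reflexivity), the same idea works once we first collapse common to mutual evidence using the head co-closure: $\jbox{t}_\ck A \rightarrow \jbox{\cclhead{t}}_\ek A \rightarrow A$ by part~(1). For part~(\ref{i-conversion}) ($i$-conversion), define $\conversion_i t \colonequals \projection_i \cclhead{t}$ and chain head co-closure with projection: $\jbox{t}_\ck A \rightarrow \jbox{\cclhead{t}}_\ek A \rightarrow \jbox{\projection_i \cclhead{t}}_i A$.

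For $\ek$-application and $\ek$-sum, the natural definitions are
\[
    t \cdot_\ek s \colonequals \tupling{\projection_1 t \cdot_1 \projection_1 s,\, \dots,\, \projection_\numberofagents t \cdot_\numberofagents \projection_\numberofagents s}
\]
and analogously for $+_\ek$. To verify $\ek$-application, project both $\jbox{t}_\ek(A\rightarrow B)$ and $\jbox{s}_\ek A$ down to each agent, apply the individual-agent application axiom to get $\jbox{\projection_i t \cdot_i \projection_i s}_i B$ for every $\agents$, conjoin these, and then use tupling to pack them into an $\ek$-term. The $\ek$-sum case is even simpler: project $\jbox{t}_\ek A$ (or $\jbox{s}_\ek A$) to each $i$, apply the individual sum to obtain $\jbox{\projection_i t +_i \projection_i s}_i A$ for all $\agents$, and tuple.

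No step looks genuinely hard; the only minor care needed is that all the definable operations live in the correct term sets (e.g.,~$\projection_i t \cdot_i \projection_i s \in \eterms_i$ so tupling is applicable), which is immediate from the formation rules. Throughout, the derivations use only propositional reasoning on top of the listed axioms, so the argument goes through for an arbitrary constant specification $\CS$ and does not rely on necessitation or internalization.
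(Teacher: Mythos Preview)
Your proposal is correct and matches the paper's proof essentially line for line: the paper defines $t \cdot_\ek s$, $t +_\ek s$, and $\conversion_i t$ exactly as you do, and derives $\ek$- and $\ck$-reflexivity by the same projection/co-closure/reflexivity chain. The only cosmetic difference is that for part~(5) the paper appeals to part~(\ref{i-conversion}) plus reflexivity whereas you appeal to co-closure plus part~(\ref{lem:basicproperties1:Erefl}), but these unwind to the same three-step derivation.
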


\begin{proof}
\begin{enumerate}
\item
    Immediate by the projection and reflexivity axioms.
\item
    Set
    $t \cdot_\ek s \colonequals \tupling{\projection_1 t \cdot_1 \projection_1 s, \dots, \projection_\numberofagents t \cdot_\numberofagents \projection_\numberofagents s}$.
\item
    Set
    $t +_\ek s \colonequals \tupling{\projection_1 t +_1 \projection_1 s, \dots, \projection_\numberofagents t +_\numberofagents \projection_\numberofagents s}$.
\item
    Set $\conversion_i t \colonequals \projection_i \cclhead{t}$.
\item
    Immediate by \ref{i-conversion}. and the reflexivity axiom.
    \qed
\end{enumerate}
\end{proof}
	
Unlike Lemma~\ref{lem:basicproperties1}, the next lemma requires that a constant specification~$\CS$ be $\ck$-axiomatically appropriate.

\begin{lemma}
Let $\CS$~be\/ $\ck$-axiomatically appropriate and $A$~be a formula.
\begin{enumerate}[leftmargin=*,topsep=0ex]
\item
    For any
    $t \in \eterms_\ck$,
    there is a term\/
    $!_\ck t \in \eterms_\ck$
    such that \\
    $\proves_\CS \jbox{t}_\ck A \rightarrow \jbox{!_\ck t}_\ck \jbox{t}_\ck A$.
    \hfill $(\ck \text{-inspection})$
\item
    For any
    $t \in \eterms_\ck$,
    there is a term\/
    $\shift t \in \eterms_\ck$ such that \\
    $\proves_\CS \jbox{t}_\ck A \rightarrow \jbox{\shift t}_\ck \jbox{\cclhead{t}}_\ek A$.
    \hfill $(\ck \text{-shift})$
\end{enumerate}
\end{lemma}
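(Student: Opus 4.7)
The plan is to establish the two claims in sequence, since $\ck$-shift will invoke the $!_\ck$ construction from $\ck$-inspection. Both constructions crucially exploit $\ck$-axiomatic appropriateness, which permits $\ck$-necessitation of any axiom instance by a single use of the axiom necessitation rule.

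For $\ck$-inspection, I would instantiate the induction axiom with the formula $\jbox{t}_\ck A$ itself playing the role of the inductive base. The key observation is that the tail co-closure axiom
\[
    \jbox{t}_\ck A \rightarrow \jbox{\ccltail{t}}_\ek \jbox{t}_\ck A
\]
already has precisely the form $B \rightarrow \jbox{s}_\ek B$ required by the second conjunct of induction, with $B := \jbox{t}_\ck A$ and $s := \ccltail{t}$. By $\ck$-axiomatic appropriateness there is a constant $c \in \pconstants_\ck$ such that $\jbox{c}_\ck(\jbox{t}_\ck A \rightarrow \jbox{\ccltail{t}}_\ek \jbox{t}_\ck A)$ is derivable via axiom necessitation. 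Instantiating induction then yields $\jbox{t}_\ck A \wedge \jbox{c}_\ck(\jbox{t}_\ck A \rightarrow \jbox{\ccltail{t}}_\ek \jbox{t}_\ck A) \rightarrow \jbox{\ind{c, \ccltail{t}}}_\ck \jbox{t}_\ck A$, and combining this with the previous step gives $\jbox{t}_\ck A \rightarrow \jbox{\ind{c, \ccltail{t}}}_\ck \jbox{t}_\ck A$ by propositional reasoning. We then set $!_\ck t := \ind{c, \ccltail{t}}$.

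For $\ck$-shift, I would lift the head co-closure axiom into a $\ck$-box via the freshly established $\ck$-inspection together with $\ck$-application. By $\ck$-axiomatic appropriateness there is a constant $c' \in \pconstants_\ck$ such that $\jbox{c'}_\ck(\jbox{t}_\ck A \rightarrow \jbox{\cclhead{t}}_\ek A)$ is derivable. One instance of the $\ck$-application axiom, fed with this formula and with $\jbox{t}_\ck A \rightarrow \jbox{!_\ck t}_\ck \jbox{t}_\ck A$ from the first part, yields $\jbox{t}_\ck A \rightarrow \jbox{c' \cdot_\ck !_\ck t}_\ck \jbox{\cclhead{t}}_\ek A$ by straightforward propositional manipulations, so we take $\shift t := c' \cdot_\ck !_\ck t$.

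The only conceptual obstacle lies in the first part: one must spot that the tail co-closure axiom is exactly the $\ek$-step that makes $\jbox{t}_\ck A$ an inductively closed formula. Routing $\ck$-shift through $\ck$-inspection is then forced rather than a matter of convenience, because a direct attack on $\ck$-shift via induction applied to $\jbox{\cclhead{t}}_\ek A$ would require $\ek$-inspection of $\jbox{\cclhead{t}}_\ek A$, which, as stressed in the opening paragraph of this section, is not definable. All remaining work amounts to routine term bookkeeping.
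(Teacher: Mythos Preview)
Your proposal is correct and matches the paper's proof exactly: the paper sets $!_\ck t \colonequals \ind{c, \ccltail{t}}$ with $\jbox{c}_\ck (\jbox{t}_\ck A \rightarrow \jbox{\ccltail{t}}_\ek \jbox{t}_\ck A) \in \CS$, and $\shift t \colonequals c \cdot_\ck (!_\ck t)$ with $\jbox{c}_\ck (\jbox{t}_\ck A \rightarrow \jbox{\cclhead{t}}_\ek A) \in \CS$. Your additional commentary on why $\ck$-shift must route through $\ck$-inspection (because $\ek$-inspection is unavailable) is a nice observation that the paper leaves implicit.
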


\begin{proof}
\begin{enumerate}
\item
    Set
    $!_\ck t \colonequals \ind{c, \ccltail{t}}$,
    where
    $\jbox{c}_\ck (\jbox{t}_\ck A \rightarrow \jbox{\ccltail{t}}_\ek \jbox{t}_\ck A) \in \CS$.
\item
    Set
    $\shift t \colonequals c \, \cdot_\ck \, (!_\ck t)$,
    where
    $\jbox{c}_\ck (\jbox{t}_\ck A \rightarrow \jbox{\cclhead{t}}_\ek A) \in \CS$.
    \qed
\end{enumerate}
\end{proof}

The following two theorems are standard in justification logics.
Their proofs can be taken almost word for word from~\cite{Art01BSL} and are, therefore, omitted here.

\begin{lemma}[Deduction Theorem]
Let $\CS$~be a constant specification and
$\Delta \cup \{ A, B \} \subseteq \formulaeLPhEC$.
Then
$\Delta, A \proves_\CS B \text{ if and only if } \Delta \proves_\CS A \rightarrow B$.
\end{lemma}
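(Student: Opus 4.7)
The plan is to prove the Deduction Theorem by induction on the length of the derivation $\Delta, A \proves_\CS B$, treating the reverse direction as an immediate application of modus ponens. The key structural observation that makes the proof go through in the standard way is that both inference rules of $\LPhEC(\CS)$ have been chosen so as to be assumption-insensitive in the right sense: modus ponens allows the usual propositional manipulation, and axiom necessitation introduces only formulae $\jbox{c}_\agentsEC D \in \CS$ without reference to any hypotheses. This is precisely why the authors prefer axiom necessitation to a general necessitation rule.

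First I would dispatch the easy direction: if $\Delta \proves_\CS A \rightarrow B$, then appending $A$ to the assumptions and applying modus ponens gives $\Delta, A \proves_\CS B$. Then, for the main direction, I would proceed by induction on a derivation of $B$ from $\Delta \cup \{A\}$, with the following cases:

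\begin{enumerate}[topsep=0ex]
\item If $B$ is an axiom of $\LPhEC$, then $\proves_\CS B$, and since $B \rightarrow (A \rightarrow B)$ is a propositional tautology, two applications of modus ponens yield $\Delta \proves_\CS A \rightarrow B$.
\item If $B \in \Delta$, the same propositional argument applies, using $B$ as a premise from $\Delta$.
\item If $B = A$, then $A \rightarrow A$ is a propositional tautology, so $\Delta \proves_\CS A \rightarrow B$ with no assumptions needed at all.
\item If $B$ was obtained by axiom necessitation, so $B = \jbox{c}_\agentsEC D$ with $\jbox{c}_\agentsEC D \in \CS$, then $\proves_\CS B$ directly from this rule, and again the propositional tautology $B \rightarrow (A \rightarrow B)$ gives $\Delta \proves_\CS A \rightarrow B$.
\item If $B$ was obtained by modus ponens from $C$ and $C \rightarrow B$, the induction hypothesis yields $\Delta \proves_\CS A \rightarrow C$ and $\Delta \proves_\CS A \rightarrow (C \rightarrow B)$, and the propositional tautology $(A \rightarrow (C \rightarrow B)) \rightarrow ((A \rightarrow C) \rightarrow (A \rightarrow B))$ combined with two applications of modus ponens closes this case.
\end{enumerate}

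There is no real obstacle here; the only point worth flagging is why the axiom necessitation case is trivial. If the rule were instead the standard necessitation rule $\frac{B}{\jbox{t}_\agentsEC B}$, one would need the existence of a term realising necessitation of $A \rightarrow B$ under the hypothesis $A$, which in general fails. Because the present rule applies only to axioms stipulated by $\CS$, its conclusion is derivable outright, and the induction hypothesis is not even invoked in that case. This is precisely the design decision that makes the proof of~\cite{Art01BSL} transport verbatim to the present multi-agent setting with common knowledge evidence terms.
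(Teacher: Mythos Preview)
Your proof is correct and is exactly the standard argument the paper has in mind: the authors omit the proof entirely, stating that it ``can be taken almost word for word from~\cite{Art01BSL},'' and what you have written is precisely that standard induction on derivations, with the crucial observation that axiom necessitation depends on no hypotheses.
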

	
\begin{lemma}[Substitution]
For any constant specification~$\CS$, any propositional variable~$P$, any
$\Delta \cup \{ A, B \} \subseteq \formulaeLPhEC$,
any
$x \in \pvariables_\agentsEC$,
and any
$t \in \eterms_\agentsEC$,
\[
    \text{if } \Delta \proves_\CS A, \text{ then } \Delta(x/t, P/B) \proves_{\CS(x/t, P/B)} A(x/t, P/B)
    \rlap{\enspace,}
\]
where $A(x/t, P/B)$~denotes the formula obtained by simultaneously replacing all occurrences of~$x$ in~$A$ with~$t$ and all occurrences of~$P$ in~$A$ with~$B$, accordingly for~$\Delta(x/t, P/B)$ and~$\CS(x/t, P/B)$.
\end{lemma}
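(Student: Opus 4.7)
The plan is a straightforward induction on the length of the derivation $\Delta \proves_\CS A$, following the standard template for substitution lemmas in justification logics. Before starting the induction, I would record the syntactic fact that substitution commutes with every formula constructor, so that $(C \to D)(x/t,P/B) = C(x/t,P/B) \to D(x/t,P/B)$, $(\jbox{s}_\agentsEC C)(x/t,P/B) = \jbox{s(x/t)}_\agentsEC C(x/t,P/B)$, and so on. Since $x$ and $t$ share the type~$\agentsEC$, the substitution preserves well-typedness of every subterm and, hence, well-formedness of every subformula.

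For the base cases I would argue as follows. If $A \in \Delta$, then $A(x/t,P/B) \in \Delta(x/t,P/B)$ by definition, so it is trivially derivable. If $A$ is a propositional tautology, so is its substitution instance. If $A$ is one of the non-propositional axioms of $\LPhEC$, then $A(x/t,P/B)$ is again an instance of the same axiom schema; this has to be checked schema by schema, but in each case the axiom is stated for arbitrary well-typed $A,B,t,s$, and substitution only produces new choices of these parameters (for example, $\jbox{t_1}_1 C \wedge \dots \wedge \jbox{t_\numberofagents}_\numberofagents C \to \jbox{\tupling{t_1,\dots,t_\numberofagents}}_\ek C$ becomes the same schema with $C$ replaced by $C(x/t,P/B)$ and each $t_j$ replaced by $t_j(x/t)$). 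Finally, if $A = \jbox{c}_\agentsEC C$ was obtained by axiom necessitation from some $\jbox{c}_\agentsEC C \in \CS$, then, because the proof constant~$c$ is not a variable and is therefore untouched by the substitution, we have $A(x/t,P/B) = \jbox{c}_\agentsEC C(x/t,P/B) \in \CS(x/t,P/B)$ by the very definition of $\CS(x/t,P/B)$, so axiom necessitation applies again.

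The inductive step for modus ponens is routine: if $A$ was obtained from $D$ and $D \to A$, apply the induction hypothesis to both premises to get $\Delta(x/t,P/B) \proves_{\CS(x/t,P/B)} D(x/t,P/B)$ and $\Delta(x/t,P/B) \proves_{\CS(x/t,P/B)} D(x/t,P/B) \to A(x/t,P/B)$, then conclude by a single modus ponens.

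The only mildly delicate point, and thus the ``main obstacle,'' is the axiom case: one needs to inspect each of the nine schemas listed in Sect.~\ref{sect:synt} and verify that the schema is truly closed under the two substitutions. This is where the carefully chosen typing discipline on variables (each $x \in \pvariables_\agentsEC$ can only occur in term positions of matching type~$\agentsEC$) does the real work, guaranteeing that after substitution the co-closure, induction, projection, and tupling axioms remain well-typed instances of the same schemas. No coordination with bound variables is required because the logic has no quantifiers, so this essentially reduces to bookkeeping, which is why \cite{Art01BSL} and hence the authors regard the argument as standard and omit the details.
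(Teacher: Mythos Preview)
Your proposal is correct and follows exactly the standard induction-on-derivations argument that the paper has in mind: the authors explicitly omit the proof, noting that it ``can be taken almost word for word from~\cite{Art01BSL}.'' Your schema-by-schema check that each axiom of~$\LPhEC$ is closed under the substitution (in particular the observation that the typing discipline on variables guarantees well-typedness of the co-closure, induction, projection, and tupling instances) and your treatment of axiom necessitation via the fact that constants are untouched are precisely the details the reference would supply.
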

	
The following lemma states that our logic can internalize its own proofs, which is an important property of justification logics.

\begin{lemma}[$\ck$-lifting]
Let $\CS$~be a pure\/ $\ck$-axiomatically appropriate constant specification.
If
\begin{equation*}
    \jbox{s_1}_\ck B_1, \dots, \jbox{s_n}_\ck B_n, C_1, \dots, C_m \proves_\CS A
    \rlap{\enspace,}
\end{equation*}
then for each\/~$\agentsEC$, there is a term
$t_\agentsEC (x_1, \dots, x_n, y_1, \dots, y_m) \in \eterms_\agentsEC$
such that
\begin{equation*}
    \jbox{s_1}_\ck B_1, \dots, \jbox{s_n}_\ck B_n, \jbox{y_1}_\agentsEC C_1, \dots, \jbox{y_m}_\agentsEC C_m \proves_\CS \jbox{t_\agentsEC (s_1, \dots, s_n, y_1, \dots, y_m)}_\agentsEC A
\end{equation*}
for fresh variables
$y_1, \dots, y_m \in \eterms_\agentsEC$.
\end{lemma}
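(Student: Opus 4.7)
The plan is to prove the lemma by induction on the length of the derivation establishing
$\jbox{s_1}_\ck B_1, \dots, \jbox{s_n}_\ck B_n, C_1, \dots, C_m \proves_\CS A$,
constructing, at each step, a term $t_\agentsEC$ for every $\agentsEC \in \{1,\dots,\numberofagents,\ek,\ck\}$ in parallel. This is the standard pattern for the lifting/internalization lemma in justification logic, except that here I must deliver a term of each of the $\numberofagents+2$ possible types simultaneously, using the operations introduced in Section~\ref{sect:basprop} to transfer evidence between levels.

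The base cases are as follows. If $A$ is a propositional or $\LPhEC$ axiom, $\ck$-axiomatic appropriateness yields some $\jbox{c}_\ck A \in \CS$; axiom necessitation then gives $\proves_\CS \jbox{c}_\ck A$, from which the $\ek$-level term $\cclhead{c}$ and the $i$-level term $\projection_i \cclhead{c} = \conversion_i c$ are obtained by co-closure and projection (cf.\ Lemma~\ref{lem:basicproperties1}). If $A$ is a hypothesis $C_j$, take $t_\agentsEC \colonequals y_j$. If $A$ is a hypothesis $\jbox{s_k}_\ck B_k$, I need a term witnessing $\jbox{t_\agentsEC}_\agentsEC \jbox{s_k}_\ck B_k$: for $\agentsEC = \ck$ use $\ck$-inspection to take $t_\ck \colonequals \,!_\ck s_k$; for $\agentsEC = \ek$ use the second co-closure axiom to take $t_\ek \colonequals \ccltail{s_k}$; for $\agentsEC = i$ apply projection to the previous, taking $t_i \colonequals \projection_i \ccltail{s_k}$. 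Finally, if $A = \jbox{c}_{\agentsEC'} D$ is obtained by axiom necessitation, the purity of $\CS$ combined with $\ck$-axiomatic appropriateness forces $\agentsEC' = \ck$, so $A$ is a $\jbox{\cdot}_\ck$-formula and is handled just like a premise $\jbox{s_k}_\ck B_k$ above.

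The inductive step is modus ponens: if $A$ follows from $D$ and $D\rightarrow A$, the induction hypothesis provides, for each $\agentsEC$, terms $u_\agentsEC$ (for $D$) and $v_\agentsEC$ (for $D\rightarrow A$), and I set $t_\agentsEC \colonequals v_\agentsEC \cdot_\agentsEC u_\agentsEC$. For $\agentsEC \in \{1,\dots,\numberofagents,\ck\}$ this is the primitive application axiom; for $\agentsEC = \ek$ it is the derived $\ek$-application from Lemma~\ref{lem:basicproperties1}, which is exactly why that lemma was proved first.

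I expect the main technical nuisance, rather than a deep obstacle, to be the hypothesis case $A = \jbox{s_k}_\ck B_k$: one must recognize that the $\jbox{s_k}_\ck B_k$ must stay in the context (they are not replaced by fresh variables like the $C_j$), and nevertheless produce a term at the right level that witnesses them. This is resolved by $\ck$-inspection together with co-closure and projection, so the whole argument hinges on the availability of $!_\ck$, $\ccltail{\cdot}$, and $\projection_i$, and on purity plus $\ck$-axiomatic appropriateness ensuring that constants introduced by axiom necessitation are themselves of $\ck$-type and can therefore be treated uniformly.
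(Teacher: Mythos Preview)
Your proposal is correct and follows essentially the same approach as the paper's proof: induction on the derivation, with the same case split and the same use of $\ck$-axiomatic appropriateness, co-closure, $i$-conversion, $\ck$-inspection, and $\agentsEC$-application. The only cosmetic difference is that for the hypothesis $\jbox{s_k}_\ck B_k$ at level~$i$ you take $t_i = \projection_i \ccltail{s_k}$, whereas the paper takes $t_i = \conversion_i\, !_\ck s_k$; both choices work.
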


\begin{proof}
We proceed by induction on the derivation of~$A$.

If $A$~is an axiom, there is a constant
$c \in \eterms_\ck$
such that
$\jbox{c}_\ck A \in \CS$
because $\CS$~is $\ck$-axiomatically appropriate.
Then take
\[
    t_\ck \colonequals  c,
    \qquad
    t_i \colonequals \conversion_i c,
    \qquad
    t_\ek \colonequals \cclhead{c}
\]
and use axiom necessitation, axiom necessitation and $i$-conversion, or axiom necessitation and the co-closure axiom respectively.

For
$A = \jbox{s_j}_\ck B_j$,
$1 \leq j \leq n$,
take
\[
    t_\ck \colonequals !_\ck s_j,
    \qquad
    t_i \colonequals \conversion_i !_\ck s_j,
    \qquad
    t_\ek \colonequals \ccltail{s_j}
\]
and use $\ck$-inspection, $\ck$-inspection and $i$-conversion, or the co-closure axiom respectively.

For
$A = C_j$,
$1 \leq j \leq m$,
take
$t_\agentsEC \colonequals y_j \in \pvariables_\agentsEC$
for a fresh variable~$y_j$.

For $A$~derived by modus ponens from
$D \rightarrow A$
and~$D$, by induction hypothesis there are terms
$r_\agentsEC, s_\agentsEC \in \eterms_\agentsEC$
such that
$\jbox{r_\agentsEC}_\agentsEC (D \rightarrow A)$
and
$\jbox{s_\agentsEC}_\agentsEC D$
are provable.
Take
$t_\agentsEC \colonequals r_\agentsEC \cdot_\agentsEC s_\agentsEC$
and use $\agentsEC$-application, which is an axiom for
$\agentsEC = i$
and for
$\agentsEC = \ck$
or follows from Lemma~\ref{lem:basicproperties1} for
$\agentsEC = \ek$.

For
$A = \jbox{c}_\ck E \in \CS$
derived by axiom necessitation, take
\[
    t_\ck \colonequals !_\ck c,
    \qquad
    t_i \colonequals \conversion_i !_\ck c,
    \qquad
    t_\ek \colonequals \ccltail{c}
\]
and use $\ck$-inspection, $\ck$-inspection and $i$-conversion, or the co-closure axiom respectively.
\qed
\end{proof}
	
\begin{corollary}[Constructive necessitation]
\label{constructivenecessitation}
Let $\CS$~be a pure\/ $\ck$-axiomatical\-ly appropriate constant specification.
For any formula~$A$, if\/
$\proves_\CS A$,
then for each\/~$\agentsEC$, there is a ground term
$t \in \eterms_\agentsEC$
such that\/
$\proves_\CS \jbox{t}_\agentsEC A$.
\end{corollary}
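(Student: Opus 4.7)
The plan is to derive this corollary as a direct instance of the $\ck$-lifting lemma proved immediately above, applied to the degenerate case where the hypothesis list is empty. Concretely, I would note that the assumption $\proves_\CS A$ is the statement $\jbox{s_1}_\ck B_1, \dots, \jbox{s_n}_\ck B_n, C_1, \dots, C_m \proves_\CS A$ in the special case $n = 0$ and $m = 0$. The $\ck$-lifting lemma (which is applicable since $\CS$ is by hypothesis pure and $\ck$-axiomatically appropriate) then yields, for each $\agentsEC$, a term $t_\agentsEC$ such that $\proves_\CS \jbox{t_\agentsEC}_\agentsEC A$.

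The only point worth checking is that this term is genuinely ground. In the statement of the lifting lemma, the term produced for $A$ depends on variables $x_1, \dots, x_n, y_1, \dots, y_m$; in our instance both $n$ and $m$ are zero, so no such variables appear, and $t_\agentsEC$ is built entirely from proof constants (supplied by axiom necessitation against the $\ck$-axiomatically appropriate $\CS$) and the primitive term-forming operations. Hence $t_\agentsEC$ contains no proof variables and is indeed ground, as required.

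There is essentially no obstacle here — the real work has been done in the $\ck$-lifting lemma, and its base cases already produce ground terms (a constant $c$, or $\conversion_i c$, or $\cclhead{c}$) whenever no hypothesis has to be reflected by a fresh variable $y_j$. The proof is therefore a one-line appeal to the lifting lemma with empty assumption lists, followed by the observation that the recursion in the lifting construction introduces fresh proof variables only when processing hypotheses $C_j$, of which there are none in this setting.
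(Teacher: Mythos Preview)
Your proposal is correct and matches the paper's approach: the corollary is stated without proof precisely because it is the $n=m=0$ instance of the $\ck$-lifting lemma, and your observation that the construction introduces fresh variables only in the hypothesis case $A=C_j$ (absent here) correctly justifies groundness.
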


The following two lemmas show that our system~$\LPhEC$ can internalize versions of the induction rule used in various axiomatizations of~$\SFourhEC$ (see~\cite{BucKuzStu09M4M} for a discussion of several axiomatizations of this kind).
	
\begin{lemma}[Internalized induction rule 1]
\label{cor:2}
Let $\CS$~be a pure\/ $\ck$-axiomatically appropriate constant specification.
For any formula~$A$, if\/
$\proves_\CS A \rightarrow \jbox{s}_\ek A$,
there is a term
$t \in \eterms_\ck$
such that\/
$\proves_\CS A \rightarrow \jbox{\ind{t,s}}_\ck A$.
\end{lemma}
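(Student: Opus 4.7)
The plan is to obtain the desired statement as an almost immediate consequence of constructive necessitation together with the induction axiom. First, I would apply Corollary~\ref{constructivenecessitation} (constructive necessitation, which is applicable since $\CS$ is pure and $\ck$-axiomatically appropriate) to the hypothesis $\proves_\CS A \rightarrow \jbox{s}_\ek A$, instantiating~$\agentsEC$ as~$\ck$. This yields a ground term $t \in \eterms_\ck$ such that
\[
    \proves_\CS \jbox{t}_\ck (A \rightarrow \jbox{s}_\ek A)
    \rlap{\enspace.}
\]

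Next, I would invoke the induction axiom
\[
    A \wedge \jbox{t}_\ck (A \rightarrow \jbox{s}_\ek A) \rightarrow \jbox{\ind{t,s}}_\ck A
\]
instantiated with precisely this term~$t$ and the given~$s$. Combining the axiom with the derived formula $\jbox{t}_\ck (A \rightarrow \jbox{s}_\ek A)$ by propositional reasoning (essentially modus ponens applied after rearranging the conjunction on the antecedent) yields $\proves_\CS A \rightarrow \jbox{\ind{t,s}}_\ck A$, which is the required conclusion.

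There is essentially no obstacle here: the induction axiom was designed precisely to internalize an induction principle of this shape, and the only nontrivial ingredient is constructive necessitation, which in turn relies on the $\ck$-lifting lemma already established above. The purity and $\ck$-axiomatic appropriateness of~$\CS$ are needed solely to justify the application of Corollary~\ref{constructivenecessitation}; no other subtleties arise.
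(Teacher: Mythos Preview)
Your proposal is correct and matches the paper's own proof essentially verbatim: apply constructive necessitation to obtain $\jbox{t}_\ck(A\rightarrow\jbox{s}_\ek A)$, then combine with the induction axiom by propositional reasoning. There is nothing to add.
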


\begin{proof}
By constructive necessitation, there exists a term
$t \in \eterms_\ck$
such that
$\proves_\CS \jbox{t}_\ck (A \rightarrow \jbox{s}_\ek A)$.
It remains to use the induction axiom and propositional reasoning.
\qed
\end{proof}

\begin{lemma}[Internalized induction rule 2]
Let $\CS$~be a pure\/ $\ck$-axiomatical\-ly appropriate constant specification.
For any formulae~$A$ and~$B$, if we have\/
$\proves_\CS B \rightarrow \jbox{s}_\ek (A \land B)$,
then there exist a term
$t \in \eterms_\ck$
and a constant
$c \in \eterms_\ck$
such that\/
$\proves_\CS B \rightarrow \jbox{c \cdot \ind{t,s}}_\ck A$,
where
$\jbox{c}_\ck (A \land B \to A) \in \CS$.
\end{lemma}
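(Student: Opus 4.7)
The plan is to reduce the statement to Lemma~\ref{cor:2} (Internalized induction rule~1) applied to the conjunction $A \land B$, and then to peel off the redundant conjunct using the constant $c$ that is explicitly supplied by $\CS$. The key idea is that although the goal has $B$ in the antecedent and only $A$ inside the $\ck$-box, Lemma~\ref{cor:2} forces whatever sits under $\jbox{s}_\ek$ to reappear unchanged under $\jbox{\ind{t,s}}_\ck$, so I first lift to $\jbox{\ind{t,s}}_\ck (A \land B)$ and only afterwards use application together with the specified constant to descend to $\jbox{c \cdot \ind{t,s}}_\ck A$.

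Concretely, I would first combine the hypothesis with the propositional tautology $(A \land B) \to B$ to obtain $\proves_\CS (A \land B) \to \jbox{s}_\ek (A \land B)$, and then invoke Lemma~\ref{cor:2} with $A \land B$ in place of $A$ to produce a term $t \in \eterms_\ck$ with $\proves_\CS (A \land B) \to \jbox{\ind{t,s}}_\ck (A \land B)$. Next, axiom necessitation on $\jbox{c}_\ck (A \land B \to A) \in \CS$ together with the application axiom (item~2 for $\agentsC = \ck$) yields $\proves_\CS \jbox{\ind{t,s}}_\ck (A \land B) \to \jbox{c \cdot \ind{t,s}}_\ck A$, which chains with the previous implication into $\proves_\CS (A \land B) \to \jbox{c \cdot \ind{t,s}}_\ck A$. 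To replace the antecedent $A \land B$ by $B$, I would apply the $\ek$-reflexivity part of Lemma~\ref{lem:basicproperties1} to the hypothesis, deriving $\proves_\CS B \to (A \land B)$, and conclude by one more transitivity.

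No step is really difficult. The only point where some ingenuity is required is the initial observation that Lemma~\ref{cor:2} cannot be applied to $A$ directly, because the hypothesis only furnishes $\ek$-evidence for $A \land B$ rather than for $A$ alone; bundling $A$ with $B$ bypasses this mismatch, and the constant $c$ then serves as a purely bookkeeping device for unbundling inside the $\ck$-modality afterwards. The fact that $c$ is fixed in the statement rather than constructed during the proof merely pins down the concrete shape of the resulting term $c \cdot \ind{t,s}$.
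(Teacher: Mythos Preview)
Your proposal is correct and follows essentially the same route as the paper's proof: weaken the antecedent to $A\land B$, apply Lemma~\ref{cor:2} to obtain $\jbox{\ind{t,s}}_\ck(A\land B)$, use the constant $c$ with $\ck$-application to strip off $B$, and finally recover the antecedent~$B$ via $\ek$-reflexivity on the original hypothesis. The only cosmetic difference is that you spell out the intermediate implication $\jbox{\ind{t,s}}_\ck(A\land B)\to\jbox{c\cdot\ind{t,s}}_\ck A$ explicitly, whereas the paper combines these steps in one line.
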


\begin{proof}
Assume
\begin{equation}
\label{l:ir:1}
    \proves_\CS B \rightarrow \jbox{s}_\ek (A \land B)
    \rlap{\enspace.}
\end{equation}
From this we immediately get
$\proves_\CS A \land B \rightarrow \jbox{s}_\ek (A \land B)$.
Thus, by Lemma~\ref{cor:2}, there is a
$t \in \eterms_\ck$
with
\begin{equation}
\label{l:ir:2}
    \proves_\CS A \land B \rightarrow \jbox{\ind{t,s}}_\ck (A \land B)
    \rlap{\enspace.}
\end{equation}
Since $\CS$~is $\ck$-axiomatically appropriate, there is a constant
$c \in \eterms_\ck$
such that
\begin{equation}
\label{l:ir:3}
    \proves_\CS \jbox{c}_\ck (A \land B \to A)
    \rlap{\enspace.}
\end{equation}
Making use of $\ck$-application, we find by~\eqref{l:ir:2} and~\eqref{l:ir:3} that
\begin{equation}
\label{l:ir:4}
    \proves_\CS A \land B \rightarrow \jbox{c \cdot \ind{t,s}}_\ck (A)
    \rlap{\enspace.}
\end{equation}
From~\eqref{l:ir:1} we get by $\ek$\text{-reflexivity} that
$\proves_\CS B \rightarrow A \land B$.
This, together with~\eqref{l:ir:4}, finally yields
$\proves_\CS B \rightarrow \jbox{c \cdot \ind{t,s}}_\ck (A)$.
\qed
\end{proof}

\section{Soundness and Completeness}
\label{sect:soundcomp}

\begin{definition}
An \emph{AF-model meeting a constant specification~$\CS$} is a structure
$\mathcal{M} = (W, R, \evidence, \valuation)$,
where\/
$(W, R, \valuation)$
is a Kripke model for\/~$\SFour_\numberofagents$ with a \emph{set of possible worlds}
$W \ne \varnothing$,
with a function
$R \colon \{ 1, \dots, \numberofagents \} \to \powerset(W \times W)$
that assigns a reflexive and transitive \emph{accessibility relation} on~$W$ to each agent
$i \in \{ 1, \dots, \numberofagents \}$,
and with a \emph{truth valuation}
$\valuation \colon \propositions \to \powerset(W)$.
We always write~$R_i$ instead of~$R(i)$ and define the accessibility relations for mutual and common knowledge in the standard way:
$R_\ek \colonequals R_1 \cup \dots \cup R_\numberofagents$
and
$R_\ck \colonequals \bigcup_{n = 1}^{\infty} (R_\ek)^n$.

An \emph{evidence function}
$\evidence \colon W \times \eterms \to \powerset\left(\formulaeLPhEC\right)$
determines the formulae evidenced by a term at a world.
We define
$\evidence_\agentsEC \colonequals \evidence \upharpoonright (W \times \eterms_\agentsEC)$.
Note that whenever
$A \in \evidence_\agentsEC (w, t)$,
it follows that
$t \in \eterms_\agentsEC$.
The evidence function~$\evidence$ must satisfy the following closure conditions: for any worlds
$w, v \in W$,
\begin{enumerate}[leftmargin=*,topsep=0.5ex]
\item
    $\evidence_\agentsC (w, t) \subseteq \evidence_\agentsC (v, t)$
    whenever
    $(w, v) \in R_\agentsC$;
    \hfill $(\text{monotonicity})$
\item
    if
    $\jbox{c}_\agentsEC A \in \CS$,
    then
    $A \in \evidence_\agentsEC (w, c)$;
    \hfill $(\text{constant specification})$
\item
    if
    $(A \rightarrow B) \in \evidence_\agentsC (w, t)$
    and
    $A \in \evidence_\agentsC (w, s)$,
    then
    $B \in \evidence_\agentsC (w, t \cdot s)$;
    \hfill $(\text{application})$
\item
    $\evidence_\agentsC (w, s) \cup \evidence_\agentsC (w, t) \subseteq \evidence_\agentsC (w, s + t)$;
    \hfill $(\text{sum})$
\item
    if
    $A \in \evidence_\agents (w, t)$,
    then
    $\jbox{t}_\agents A \in \evidence_\agents (w, ! t)$;
    \hfill $(\text{inspection})$
\item
    if
    $A \in \evidence_i (w, t_i)$
    for all
    $1 \leq i \leq \numberofagents$,
    then
    $A \in \evidence_\ek (w, \tupling{t_1, \dots, t_\numberofagents})$;
    \hfill $(\text{tupling})$
\item
    if
    $A \in \evidence_\ek (w, t)$,
    then
    $A \in \evidence_i (w, \projection_i t)$;
    \hfill $(\text{projection})$
\item
    if
    $A \in \evidence_\ck (w, t)$,
    then
    $A \in \evidence_\ek (w, \cclhead{t})$
    and
    $\jbox{t}_\ck A \in \evidence_\ek (w, \ccltail{t})$;
    \hfill $(\text{co-closure})$
\item
    if
    $A \in \evidence_\ek (w, s)$
    and
    $(A \rightarrow \jbox{s}_\ek A) \in \evidence_\ck (w, t)$, \\
    then
    $A \in \evidence_\ck (w, \ind{t,s})$.
    \hfill $(\text{induction})$
\end{enumerate}
\end{definition}
When the model is clear from the context, we will directly refer to
$R_1, \dots, R_\numberofagents$,
$R_\ek$, $R_\ck$,
$\evidence_1, \dots, \evidence_\numberofagents$,
$\evidence_\ek$, $\evidence_\ck$, $W$, and~$\valuation$.
	
\begin{definition}
A ternary relation
$\mathcal{M}, w \Vdash A$
for \emph{formula~$A$ being satisfied at a world
$w \in W$
in an AF-model
$\mathcal{M} = (W, R, \evidence, \valuation)$}
is defined by induction on the structure of the formula~$A$:
\begin{enumerate}[leftmargin=*,topsep=0ex]
\item
    $\mathcal{M}, w \Vdash P$
    if and only if
    $w \in \valuation(P)$;
\item
    $\Vdash$~behaves classically with respect to the propositional connectives;
\item
    $\mathcal{M}, w \Vdash \jbox{t}_\agentsEC A$
    if and only if
    1)~$A \in \evidence_\agentsEC (w, t)$
    and
    2)~$\mathcal{M}, v \Vdash A$
    for all
    $v \in W$
    with
    $(w, v) \in R_\agentsEC$.
\end{enumerate}
We write
$\mathcal{M} \Vdash A$
if
$\mathcal{M}, w \Vdash A$
for all
$w \in W$.
We write\/
$\Vdash_\CS A$
and say that formula~$A$ is \emph{valid with respect to~$\CS$} if
$\mathcal{M} \Vdash A$
for all AF-models~$\mathcal{M}$ meeting~$\CS$.
\end{definition}
	
\begin{lemma}[Soundness]
\label{soundness}
Provable formulae are valid:\/
$\proves_\CS A$
implies\/
$\Vdash_\CS A$.
\end{lemma}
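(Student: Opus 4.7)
The plan is to proceed by induction on the derivation of $A$ in $\LPhEC(\CS)$. The rule cases are the standard ones: modus ponens preserves validity because $\Vdash$ behaves classically on~$\rightarrow$, and axiom necessitation produces $\jbox{c}_\agentsEC B$ with $\jbox{c}_\agentsEC B \in \CS$, so both halves of the satisfaction clause hold automatically --- the evidence half directly from the constant-specification closure condition on~$\evidence$, and the Kripke half from the (independently established) validity of the axiom~$B$ at every world of every AF-model. The remaining work is then a case analysis verifying the validity of the nine axiom schemes.

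For each modal axiom I would check the evidence-function component and the Kripke-accessibility component of the clause $\mathcal{M}, w \Vdash \jbox{t}_\agentsEC A$ separately. The evidence-function components mirror the closure conditions on~$\evidence$ one-to-one: application, sum, inspection, tupling, projection, co-closure, and induction each pair with their eponymous closure condition, and propositional tautologies are purely classical. The Kripke components of application, sum, reflexivity, and inspection are routine modal checks using the reflexivity and transitivity of~$R_i$; tupling and projection exploit the definition $R_\ek = R_1 \cup \dots \cup R_\numberofagents$, so that quantifying over $R_\ek$-successors reduces to quantifying separately over each $R_i$-successor set.

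The two genuinely new cases are co-closure and induction, and for both I would exploit the structure $R_\ck = \bigcup_{n\geq 1}(R_\ek)^n$. For $\jbox{t}_\ck A \rightarrow \jbox{\cclhead{t}}_\ek A$ the inclusion $R_\ek \subseteq R_\ck$ pushes truth of~$A$ from $R_\ck$-successors to $R_\ek$-successors. For $\jbox{t}_\ck A \rightarrow \jbox{\ccltail{t}}_\ek \jbox{t}_\ck A$ one additionally needs $A \in \evidence_\ck(v, t)$ at every $R_\ek$-successor~$v$ of~$w$, which follows from monotonicity of~$\evidence_\ck$ along~$R_\ck$ together with $R_\ek \subseteq R_\ck$, and one needs transitivity of~$R_\ck$ (a consequence of $(R_\ek)^m \circ (R_\ek)^n = (R_\ek)^{m+n}$) to conclude that $R_\ck$-successors of such~$v$ are themselves $R_\ck$-successors of~$w$.

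The main obstacle is the Kripke component of the induction axiom. Assuming $\mathcal{M}, w \Vdash A \wedge \jbox{t}_\ck (A \rightarrow \jbox{s}_\ek A)$, the evidence component $A \in \evidence_\ck(w, \ind{t,s})$ follows from the induction closure condition once one observes, using reflexivity of~$R_\ek$ and hence of~$R_\ck$, that $\mathcal{M}, w \Vdash \jbox{s}_\ek A$ and therefore $A \in \evidence_\ek(w, s)$. To establish $\mathcal{M}, v \Vdash A$ for every $v$ with $(w, v) \in R_\ck$, I would proceed by a secondary induction on $n \geq 1$, showing $\mathcal{M}, v \Vdash A$ whenever $(w, v) \in (R_\ek)^n$. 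The base case $n = 1$ uses $\mathcal{M}, w \Vdash \jbox{s}_\ek A$ directly; for the step, an intermediate $u$ with $(w, u) \in (R_\ek)^{n-1}$ satisfies~$A$ by induction hypothesis and also satisfies $A \rightarrow \jbox{s}_\ek A$ because $(w, u) \in R_\ck$, so $\mathcal{M}, u \Vdash \jbox{s}_\ek A$ propagates~$A$ to the $R_\ek$-successor~$v$.
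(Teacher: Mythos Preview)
Your proposal is correct and follows essentially the same approach as the paper's proof: induction on the derivation, splitting each $\jbox{t}_\agentsEC$-case into an evidence-function half (handled by the corresponding closure condition) and a Kripke half, with the induction axiom treated via the same secondary induction on~$n$ over $(R_\ek)^n$. The only cosmetic difference is that the paper phrases the co-closure transitivity step as $R_\ek \circ R_\ck \subseteq R_\ck$ rather than invoking transitivity of~$R_\ck$ directly, which amounts to the same thing given $R_\ek \subseteq R_\ck$.
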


\begin{proof}
Let
$\mathcal{M} = (W, R, \evidence, \valuation)$
be an AF-model meeting~$\CS$ and let
$w \in W$.
We show soundness by induction on the derivation of~$A$.
The cases for propositional tautologies, for the application, sum, reflexivity, and inspection axioms, and for modus ponens rule are the same as for the single-agent case in~\cite{Fit05APAL} and are, therefore, omitted.
We show the remaining five cases:
\begin{description}[leftmargin=*,topsep=0ex]
\item[(tupling)]
    Assume
    $\mathcal{M}, w \Vdash \jbox{t_i}_i A$
    for all
    $1 \leq i \leq \numberofagents$.
    Then for all
    $1\leq i \leq \numberofagents$,
    we have
    1)~$\mathcal{M}, v \Vdash A$
    for all
    $v \in W$
    with
    $(w, v) \in R_i$
    and
    2)~$A \in \evidence_i (w, t_i)$.
    So, by the tupling closure condition,
    $A \in \evidence_\ek (w, \tupling{t_1, \dots, t_\numberofagents})$
    from~2).
    Since by definition
    $R_\ek = \bigcup_{i=1}^\numberofagents R_i$,
    it follows from~1) that
    $\mathcal{M}, v \Vdash A$
    for all
    $v \in W$
    with
    $(w, v)\in R_\ek$.
    Hence,
    $\mathcal{M}, w \Vdash \jbox{\tupling{t_1, \dots, t_\numberofagents}}_\ek A$.
\item[(projection)]
    Assume
    $\mathcal{M}, w \Vdash \jbox{t}_\ek A$.
    Then
    1)~$\mathcal{M}, v \Vdash A$
    for all
    $v \in W$
    with
    $(w, v) \in R_\ek$
    and
    2)~$A \in \evidence_\ek (w, t)$.
    By the projection closure condition, it follows from~2) that
    $A \in \evidence_i (w, \projection_i t)$.
    In addition, since
    $R_\ek = \bigcup_{i=1}^\numberofagents R_i$,
    we get
    $\mathcal{M}, v \Vdash A$
    for all
    $v \in W$
    with
    $(w, v) \in R_i$
    by~1).
    Thus,
    $\mathcal{M}, w \Vdash \jbox{\projection_i t}_i A$.
\item[(co-closure)]
    Assume
    $\mathcal{M}, w \Vdash \jbox{t}_\ck A$.
    Then
    1)~$\mathcal{M}, v \Vdash A$
    for all
    $v \in W$
    with
    $(w, v) \in R_\ck $
    and
    2)~$A \in \evidence_\ck (w, t)$.
    It follows from~1) that for all
    $v' \in W$
    with
    $(w, v') \in R_\ek$,
    we have
    $\mathcal{M}, v' \Vdash A$
    since
    $R_\ek \subseteq R_\ck$;
    also, due to the monotonicity closure condition,
    $\mathcal{M}, v' \Vdash \jbox{t}_\ck A$
    since
    $R_\ek \circ R_\ck \subseteq R_\ck$.
    From~2), by the co-closure closure condition,
    $A \in \evidence_\ek (w, \cclhead{t})$
    and
    $\jbox{t}_\ck A \in \evidence_\ek (w, \ccltail{t})$.
    Hence,
    $\mathcal{M}, w \Vdash \jbox{\cclhead{t}}_\ek A$
    and
    $\mathcal{M}, w \Vdash \jbox{\ccltail{t}}_\ek \jbox{t}_\ck A$.
\item[(induction)]
    Assume
    $\mathcal{M}, w \Vdash A$
    and
    $\mathcal{M}, w \Vdash \jbox{t}_\ck (A \rightarrow \jbox{s}_\ek A)$.
    From the second assumption and the reflexivity of~$R_\ck$, we get
    $\mathcal{M}, w \Vdash A \rightarrow \jbox{s}_\ek A$;
    thus,
    $\mathcal{M}, w \Vdash \jbox{s}_\ek A$
    by the first assumption.
    So
    $A \in \evidence_\ek (w, s)$
    and, by the second assumption,
    $A \rightarrow \jbox{s}_\ek A \in \evidence_\ck (w, t)$.
    By the induction closure condition, we have
    $A \in \evidence_\ck (w, \ind{t,s})$.
    To show
    $\mathcal{M}, v \Vdash A$
    for all
    $v \in W$
    with
    $(w, v) \in R_\ck$,
    we prove that
    $\mathcal{M}, v \Vdash A$
    for all
    $v \in W$
    with
    $(w, v) \in (R_\ek)^n$
    by induction on the positive integer~$n$.

    The \textbf{base case $n = 1$} immediately follows from
    $\mathcal{M}, w \Vdash \jbox{s}_\ek A$.

    \textbf{Induction step.}
    Let
    $(w, v^\prime) \in (R_\ek)^n$
    and
    $(v^\prime, v) \in R_\ek$
    for some
    $v, v^\prime \in W$.
    By induction hypothesis,
    $\mathcal{M}, v^\prime \Vdash A$.
    Since
    $\mathcal{M}, w \Vdash \jbox{t}_\ck (A \rightarrow \jbox{s}_\ek A)$,
    we get
    $\mathcal{M}, v^\prime \Vdash A \rightarrow \jbox{s}_\ek A$.
    Thus,
    $\mathcal{M}, v^\prime \Vdash \jbox{s}_\ek A$,
    which yields
    $\mathcal{M}, v \Vdash A$.

    Finally, we conclude that
    $\mathcal{M}, w \Vdash \jbox{\ind{t,s}}_\ck A$.
\item[(axiom necessitation)]
    Let $A$~be an axiom and $c$~be a proof constant such that
    $\jbox{c}_\agentsEC A \in \CS$.
    Since $A$~is an axiom,
    $\mathcal{M}, w \Vdash A$
    for all
    $w \in W$,
    as shown above.
    Since $\mathcal{M}$~is an AF-model meeting~$\CS$, we also have
    $A \in \evidence_\agentsEC (w, c)$
    for all
    $w \in W$
    by the constant specification closure condition.
    Thus,
    $\mathcal{M}, w \Vdash \jbox{c}_\agentsEC A$
    for all
    $w \in W$.
    \qed
\end{description}
\end{proof}

\begin{definition}
Let $\CS$~be a constant specification.
A set~$\Phi$ of formulae is called \emph{$\CS$-consistent} if
$\Phi \nvdash_\CS \phi$
for some formula~$\phi$.
A set~$\Phi$ is called \emph{maximal $\CS$-consistent} if it is $\CS$-consistent and has no $\CS$-consistent proper extensions.
\end{definition}
Whenever safe, we do not mention the constant specification and only talk about consistent and maximal consistent sets.
It can be easily shown that maximal consistent sets contain all axioms of~$\LPhEC$ and are closed under modus ponens.

\begin{definition}
For a set~$\Phi$ of formulae, we define
\begin{equation*}
    \Phi / \agentsEC \colonequals  \{ A \; : \; \text{there is a } t \in \eterms_\agentsEC \text{ such that } \jbox{t}_\agentsEC A \in \Phi \}
    \rlap{\enspace.}
\end{equation*}
\end{definition}
	
\begin{definition}
Let $\CS$~be a constant specification.
The \emph{canonical AF-model
$\mathcal{M} = (W, R, \evidence, \valuation)$
meeting~$\CS$} is defined as follows:
\begin{enumerate}[leftmargin=*,topsep=0.4ex,itemsep=0.4ex]
\item
    $W \colonequals \{ w \subseteq \formulaeLPhEC \; : \; w \text{ is a maximal $\CS$-consistent set} \}$;
\item
    $R_i \colonequals \{ (w,v) \in W \times W \; : \; w / i \subseteq v \}$;
\item
    $\evidence_\agentsEC (w, t) \colonequals \{ A \in \formulaeLPhEC \; : \; \jbox{t}_\agentsEC A \in w \}$;
\item
    $\valuation(P_n) \colonequals \{ w \in W \; : \; P_n \in w \}$.
\end{enumerate}
\end{definition}

\begin{lemma}
\label{canonicalmodelisfittingmodel}
Let $\CS$~be a constant specification.
The canonical AF-model meeting~$\CS$ is an AF-model meeting~$\CS$.
\end{lemma}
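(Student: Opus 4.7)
The plan is to verify each requirement in the definition of an AF-model meeting~$\CS$: (a)~that $W$ is nonempty; (b)~that each~$R_i$ is reflexive and transitive; and (c)~that the nine closure conditions on the evidence function~$\evidence$ hold. All of the work amounts to extracting the appropriate axiom inside a maximal $\CS$-consistent set and closing under modus ponens.

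For nonemptiness, I would invoke a standard Lindenbaum construction: $\LPhEC(\CS)$ is consistent (one can validate all axioms by the truth-value assignment that simply forgets every $\jbox{t}_\agentsEC$), so there exists at least one maximal $\CS$-consistent set. Reflexivity of~$R_i$ follows immediately from the reflexivity axiom $\jbox{t}_i A \rightarrow A$: any witness $\jbox{t}_i A \in w$ forces $A \in w$. For transitivity, given $w/i \subseteq v$, $v/i \subseteq u$, and $\jbox{t}_i A \in w$, the inspection axiom yields $\jbox{!t}_i \jbox{t}_i A \in w$, hence $\jbox{t}_i A \in v$, hence $A \in u$.

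Eight of the nine closure conditions on~$\evidence$ are essentially bookkeeping. The constant specification condition is exactly the content of the axiom necessitation rule, and each of application, sum, inspection, tupling, projection, co-closure, and induction corresponds directly to one of the axioms of~$\LPhEC$. The pattern is always the same: unfold $A \in \evidence_\agentsEC(w,t)$ as $\jbox{t}_\agentsEC A \in w$, apply the matching axiom together with modus ponens inside~$w$, and re-fold. For example, for tupling, $A \in \evidence_i(w, t_i)$ for all~$i$ gives $\jbox{t_i}_i A \in w$ for all~$i$, and the tupling axiom then places $\jbox{\tupling{t_1, \dots, t_\numberofagents}}_\ek A$ in~$w$, i.e.,~$A \in \evidence_\ek(w, \tupling{t_1, \dots, t_\numberofagents})$.

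The one condition that requires real argument is monotonicity, and I expect the case $\agentsC = \ck$ to be the main obstacle. For an individual agent~$i$ it again reduces to the inspection trick used for transitivity: $\jbox{t}_i A \in w$ gives $\jbox{!t}_i \jbox{t}_i A \in w$, so $\jbox{t}_i A \in v$ whenever $w/i \subseteq v$. For common knowledge, since $R_\ck$ is the transitive closure $\bigcup_{n \geq 1}(R_\ek)^n$ of $R_\ek = R_1 \cup \dots \cup R_\numberofagents$, I would induct on the length of an $R_\ek$-path $w = w_0, w_1, \dots, w_n = v$. For each single step $(w_k, w_{k+1}) \in R_{i_k}$, starting from $\jbox{t}_\ck A \in w_k$, I would first apply the co-closure axiom to obtain $\jbox{\ccltail{t}}_\ek \jbox{t}_\ck A \in w_k$, then the projection axiom to obtain $\jbox{\projection_{i_k} \ccltail{t}}_{i_k} \jbox{t}_\ck A \in w_k$, which forces $\jbox{t}_\ck A \in w_k / i_k \subseteq w_{k+1}$. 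Propagating this along the path yields $\jbox{t}_\ck A \in v$, which is exactly $A \in \evidence_\ck(v, t)$.
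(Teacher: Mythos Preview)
Your proposal is correct and follows essentially the same route as the paper: the paper likewise defers the $R_i$ and the application/sum/inspection/constant-specification checks to the single-agent literature, verifies tupling, projection, co-closure, and induction by the unfold--axiom--refold pattern you describe, and handles $\ck$-monotonicity by the very co-closure/projection propagation along an $R_\ek$-path that you outline. The only wrinkle you gloss over is the induction closure condition, where the hypothesis gives $\jbox{s}_\ek A \in w$ but the induction axiom needs $A$ itself as a conjunct, so one must first invoke $\ek$-reflexivity (projection followed by reflexivity) before the ``matching axiom'' applies; the paper makes this step explicit.
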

\begin{proof}
The proof of reflexivity and transitivity of each~$R_i$, as well as the argument for the constant specification, application, sum, and inspection closure conditions, is the same  as in the single-agent case (see~\cite{Fit05APAL}).
We show the remaining five closure conditions:
\begin{description}[leftmargin=*,topsep=0ex]
\item[(tupling)]
    Assume
    $A \in \evidence_\agents (w, t_\agents)$
    for all
    $1 \leq \agents \leq \numberofagents$.
    By definition of~$\evidence_i$, we have
    $\jbox{t_\agents}_\agents A \in w$
    for all
    $1 \leq \agents \leq \numberofagents$.
    Therefore, by the tupling axiom and maximal consistency,
    $\jbox{\tupling{t_1, \dots, t_\numberofagents}}_\ek A \in w$.
    Thus,
    $A \in \evidence_\ek (w, \tupling{t_1, \dots, t_\numberofagents})$.
\item[(projection)]
    Assume
    $A \in \evidence_\ek (w, t)$.
    Thus, we have
    $\jbox{t}_\ek A \in w$.
    Then, by the projection axiom and maximal consistency,
    $\jbox{\projection_\agents t}_\agents A \in w$,
    and thus
    $A \in \evidence_\agents (w, \projection_\agents t)$.
\item[(co-closure)]
    Assume
    $A \in \evidence_\ck (w, t)$.
    Thus,
    $\jbox{t}_\ck A \in w$,
    and, by the co-closure axioms and maximal consistency,
    $\jbox{\cclhead{t}}_\ek A \in w$
    and
    $\jbox{\ccltail{t}}_\ek \jbox{t}_\ck A \in w$.
    Hence,
    $A \in \evidence_\ek (w, \cclhead{t})$
    and
    $\jbox{t}_\ck A \in \evidence_\ek (w, \ccltail{t})$.
\item[(induction)]
    Assume
    $A \in \evidence_\ek (w, s)$
    and
    $(A \rightarrow \jbox{s}_\ek A) \in \evidence_\ck (w, t)$.
    Then we have
    $\jbox{s}_\ek A \in w$
    and
    $\jbox{t}_\ck (A \rightarrow \jbox{s}_\ek A) \in w$.
    From
    $\proves_\CS \jbox{s}_\ek A \rightarrow A$
    (Lemma~\ref{lem:basicproperties1}.\ref{lem:basicproperties1:Erefl}) and the induction axiom, it follows by maximal consistency that
    $A \in w$
    and
    $\jbox{\ind{t,s}}_\ck A \in w$.
    Therefore,
    $A \in \evidence_\ck (w, \ind{t,s})$.
\item[(monotonicity)]
    We show only the case of
    $\agentsC = \ck$
    since the other cases are the same as in~\cite{Fit05APAL}.
    It is sufficient to prove by induction on the positive integer~$n$ that
    \begin{equation}
    \label{ts:eq:mon:1}
        \text{if } \jbox{t}_\ck A \in w \text{ and } (w, v) \in (R_\ek)^n, \text{ then } \jbox{t}_C A \in v
        \rlap{\enspace.}
    \end{equation}

    \textbf{Base case $n = 1$.}
    Assume
    $(w, v) \in R_\ek$,
    i.e.,~$w / \agents \subseteq v$
    for some~$\agents$.
    As
    $\jbox{t}_\ck A \in w$,
    $\jbox{\projection_\agents \ccltail{t}}_\agents \jbox{t}_\ck A \in w$
    by maximal consistency, and hence
    $\jbox{t}_\ck  A \in w / \agents \subseteq v$.
    The argument for the \textbf{induction step} is similar.

    Now assume
    $(w, v) \in R_\ck = \bigcup_{n=1}^{\infty} (R_\ek)^n$
    and
    $A \in \evidence_\ck (w, t)$,
    i.e.,~$\jbox{t}_\ck A \in w$.
    As shown above,
    $\jbox{t}_\ck A \in v$.
    Thus,
    $A \in \evidence_\ck (v, t)$.
    \qed
\end{description}
\end{proof}

\begin{remark}
\label{nonstandardbehaviour}
Let $R_\ck^\prime$~denote the binary relation on~$W$ given by
\[
    (w, v) \in R_\ck^\prime \quad \text{if and only if} \quad w / \ck \subseteq v
    \rlap{\enspace.}
\]
An argument similar to the one just used for monotonicity shows that
$R_\ck \subseteq R_\ck^\prime$.
However, the converse does not hold for any pure $\ck$-axiomatically appropriate constant specification~$\CS$, which we demonstrate by adapting an example from~\cite{HM95}.
Let
\begin{equation*}
    \Phi \colonequals \{ \jbox{s_n}_\ek \dots \jbox{s_1}_\ek  P \; : \; n \geq 1, \, s_1, \dots, s_n \in \eterms_\ek \} \cup \{ \neg \jbox{t}_\ck P \; : \; t \in \eterms_\ck \}
    \rlap{\enspace.}
\end{equation*}
This set is $\CS$-consistent for any
$P \in \propositions$.

To see this, let
$\Phi^\prime \subseteq \Phi$
be finite and let $m$~denote the maximal number of terms such that
$\jbox{s_m}_\ek \dots \jbox{s_1}_\ek P \in \Phi^\prime$.
Define the model
$\mathcal{N} \colonequals (\N, R^\mathcal{N}, \evidence^\mathcal{N}, \valuation^\mathcal{N})$
by
\begin{itemize}[topsep=0ex,itemsep=0.5ex]
\item
    $R^\mathcal{N}_i \colonequals \{ (n, n+1) \in \N^2 \; : \; n \mod h = i \} \cup \{ (n, n) \in \N^2 \; : \; n \in \N \}$;
\item
    $\evidence^\mathcal{N} (n, s) \colonequals  \formulaeLPhEC$
    for all
    $n \in \N$
    and terms
    $s \in \eterms$;
\item
    $\valuation^\mathcal{N}(P) \colonequals  \{ 1, 2, \dots, m+1 \} \subseteq \N$.
\end{itemize}
Clearly, $\mathcal{N}$~meets any constant specification; in particular, it meets~$\CS$.
It can also be easily verified that
$\mathcal{N}, 1 \Vdash \Phi^\prime$;
therefore, $\Phi^\prime$~is $\CS$-consistent.

Since $\Phi$~is $\CS$-consistent, there exists a maximal $\CS$-consistent set
$w \supseteq \Phi$.
Let us show that the set
$\Psi \colonequals \{ \neg P \} \cup (w / \ck)$
is also $\CS$-consistent.
Indeed, if it were not the case, there would  exist formulae
$B_1, \dots, B_n \in w / \ck$ such that
\[
    \proves_\CS B_1 \rightarrow (B_2 \rightarrow \dots \rightarrow (B_n \rightarrow P) \dots )
    \rlap{\enspace.}
\]
Then, by Corollary~\ref{constructivenecessitation}, there would exist a term
$s \in \eterms_\ck$
such that
\[
    \proves_\CS \jbox{s}_\ck{(B_1 \rightarrow (B_2 \rightarrow \dots \rightarrow (B_n \rightarrow P) \dots ) )}
    \rlap{\enspace.}
\]
But this would imply
$\jbox{( \dots (s \cdot t_1) \cdots t_{n-1}) \cdot t_n}_\ck{P} \in w$
for
$\jbox{t_j}_\ck{B_j} \in w$,
$1 \leq j \leq n$,
a contradiction with the consistency of~$w$.

Let $v$~be a maximal $\CS$-consistent set that contains~$\Psi$,
i.e.,~$v \supseteq \Psi$.
Clearly,
$w / \ck \subseteq v$,
i.e.,~$(w, v) \in R_\ck^\prime$,
but
$(w, v) \notin R_\ck$
because this would imply
$P \in v$,
which cannot happen.
It follows that
$R_\ck \subsetneq R_\ck^\prime$.

Similarly, we can define~$R_\ek^\prime$ by
$(w, v) \in R_\ek^\prime$
if and only if
$w / \ek \subseteq v$.
However,
$ R_\ek^\prime = R_\ek$
for any $\ck$-axiomatically appropriate constant specification~$\CS$.
Indeed,  is easy to show that
$R_\ek \subseteq R_\ek^\prime$.
For the converse,  assume
$(w, v) \notin R_\ek$,
then
$(w, v) \notin R_i$
for all
$1 \leq i \leq h$.
So there are formulae
$A_1, \dots, A_\numberofagents$
such that
$\jbox{t_i}_i A_i \in w$
for some
$t_i \in \eterms_i$,
but
$A_i \notin v$.
Now let
$\jbox{c_i}_\ck (A_i \rightarrow A_1 \vee \dots \vee A_\numberofagents) \in \CS$
for constants
$c_1, \dots, c_\numberofagents$.
Then
$\jbox{\conversion_i c_i \cdot t_i}_i (A_1 \vee \dots \vee A_\numberofagents) \in w$
for all
$1 \leq i \leq h$,
so
$\jbox{\tupling{\conversion_1 c_1 \cdot t_1, \dots, \conversion_\numberofagents c_\numberofagents \cdot t_\numberofagents}}_\ek (A_1 \vee \dots \vee A_\numberofagents) \in w$.
However,
$A_i \notin v$
for any
$1 \leq i \leq h$;
therefore, by the maximal consistency of~$v$,
$A_1 \vee \dots \vee A_\numberofagents \notin v$
either.
Hence,
$w / \ek \nsubseteq v$,
so
$(w, v) \notin R_\ek^\prime$.
\end{remark}

\begin{lemma}[Truth Lemma]
\label{truthlemma}
Let $\CS$~be a constant specification and $\mathcal{M}$~be the canonical AF-model meeting~$\CS$.
For all formulae~$A$ and all worlds
$w \in W$,
\begin{equation*}
    A \in w \text{ if and only if } \mathcal{M}, w \Vdash A
    \rlap{\enspace.}
\end{equation*}
\end{lemma}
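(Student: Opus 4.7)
The plan is to prove the Truth Lemma by induction on the structure of the formula~$A$. The propositional atom case follows directly from the definition of~$\valuation$, and the Boolean connective cases are routine, relying only on the fact that maximal consistent sets are closed under modus ponens and contain all propositional tautologies. All the real work lies in the modal case $A = \jbox{t}_\agentsEC B$, which I would split into the two implications separately.

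For the direction $\mathcal{M}, w \Vdash \jbox{t}_\agentsEC B \Rightarrow \jbox{t}_\agentsEC B \in w$, observe that the semantic clause for $\jbox{t}_\agentsEC B$ demands in particular $B \in \evidence_\agentsEC(w,t)$. By the definition of the canonical evidence function, this is literally the statement $\jbox{t}_\agentsEC B \in w$, so nothing more needs to be said. This is the characteristic convenience of Fitting-style semantics, and no appeal to the induction hypothesis is required here.

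For the converse $\jbox{t}_\agentsEC B \in w \Rightarrow \mathcal{M}, w \Vdash \jbox{t}_\agentsEC B$, I need both (1) $B \in \evidence_\agentsEC(w,t)$, which is again immediate from the definition, and (2) $\mathcal{M}, v \Vdash B$ for every $v$ with $(w,v) \in R_\agentsEC$. By the induction hypothesis, (2) reduces to showing $B \in v$. I would handle the three subcases for~$\agentsEC$ in order of increasing difficulty. If $\agentsEC = i$, then $(w,v) \in R_i$ means $w/i \subseteq v$, and $\jbox{t}_i B \in w$ directly gives $B \in w/i \subseteq v$. If $\agentsEC = \ek$, then $(w,v) \in R_i$ for some~$i$, and the projection axiom together with maximal consistency yields $\jbox{\projection_i t}_i B \in w$, hence $B \in w/i \subseteq v$.

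The main obstacle, as expected, is the case $\agentsEC = \ck$. Here $(w,v) \in R_\ck = \bigcup_{n\geq 1}(R_\ek)^n$, so $(w,v) \in (R_\ek)^n$ for some~$n$. I would invoke the monotonicity argument already established in the proof of Lemma~\ref{canonicalmodelisfittingmodel}, namely the auxiliary claim~\eqref{ts:eq:mon:1}, to conclude that $\jbox{t}_\ck B \in v$. Then $\ck$-reflexivity (Lemma~\ref{lem:basicproperties1}) together with the maximal consistency of~$v$ yields $B \in v$, which closes the induction. No new combinatorial ideas are needed beyond what was developed for the canonical model construction; the Truth Lemma essentially packages together the closure conditions already verified and the monotonicity argument for $R_\ck$.
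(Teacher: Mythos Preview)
Your proposal is correct and follows essentially the same route as the paper: induction on the structure of~$A$, with the converse direction handled uniformly via the definition of the canonical evidence function, and the forward direction split into the three subcases $\agentsEC = i, \ek, \ck$, using projection for~$\ek$ and the monotonicity claim~\eqref{ts:eq:mon:1} together with $\ck$-reflexivity for~$\ck$. The only cosmetic difference is that you factor out the converse direction once for all~$\agentsEC$, whereas the paper repeats (and then cross-references) it within each case.
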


\begin{proof}
The proof is by induction on the structure of~$A$.
The cases for propositional variables and propositional connectives are immediate by the definition of~$\Vdash$ and by the maximal consistency of~$w$.
We check the remaining cases:

\noindent
\textbf{Case} $A$~is~$\jbox{t}_i B$.
Assume
$A \in w$.
Then
$B \in w / i$
and
$B \in \evidence_i(w,t)$.
Consider any~$v$ such that
$(w, v) \in R_i$.
Since
$w / i \subseteq v$,
it follows that
$B \in v$,
and thus, by induction hypothesis,
$\mathcal{M}, v \Vdash B$.
And
$\mathcal{M}, w \Vdash A$
immediately follows from this.
	
For the converse, assume
$\mathcal{M}, w \Vdash \jbox{t}_i B$.
By definition of~$\Vdash$ we get
$B \in \evidence_i (w, t)$,
from which
$\jbox{t}_i B \in w$
immediately follows by definition of~$\evidence_i$.

\noindent
\textbf{Case} $A$~is~$\jbox{t}_\ek B$.
Assume
$A \in w$
and consider any~$v$ such that
$(w, v) \in R_\ek$.
Then
$(w, v) \in R_i$
for some
$1 \leq i\leq h$,
i.e.,~$w / i \subseteq v$.
By definition of~$\evidence_\ek$, we get
$B \in \evidence_\ek (w, t)$.
By maximal consistency of~$w$, it follows that
$\jbox{\projection_i t}_i B \in w$,
and thus
$B \in w / i \subseteq v$.
Since, by induction hypothesis,
$\mathcal{M}, v \Vdash B$,
we conclude that
$\mathcal{M}, w \Vdash A$.
The argument for the converse repeats the one from the previous case.
	
\noindent
\textbf{Case} $A$~is~$\jbox{t}_\ck B$.
Assume
$A \in w$
and consider any~$v$ such that
$(w, v)\in R_\ck$,
i.e.,~$(w, v) \in (R_\ek)^n$
for some
$n \geq 1$.
As in the previous cases,
$B \in \evidence_\ck (w, t)$
by definition of~$\evidence_\ck$.
By~\eqref{ts:eq:mon:1} we find
$A \in v$,
and thus, by $\ck$-reflexivity and maximal consistency, also
$B \in v$.
Hence, by the induction hypothesis
$\mathcal{M}, v \Vdash B$.
Now
$\mathcal{M}, w \Vdash A$ immediately follows.
The argument for the converse repeats the one from the previous cases.
\qed
\end{proof}

Note that the converse directions in the proof above are far from trivial in the modal case, see e.g.~\cite{HM95}.
The last case, in particular, usually requires more sophisticated methods that guarantee the finiteness of the model.
	
\begin{theorem}[Completeness]
$\LPhEC(\CS)$~is sound and complete with respect to the class of AF-models meeting~$\CS$, i.e.,~for all formulae
$A \in \formulaeLPhEC$,
\begin{equation*}
    \proves_\CS A \text{ if and only if\/ } \Vdash_\CS A
    \rlap{\enspace.}
\end{equation*}
\end{theorem}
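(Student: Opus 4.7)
The plan is to handle the two directions separately. Soundness is already Lemma~\ref{soundness}, so the real task is completeness: given $\nvdash_\CS A$, exhibit an AF-model meeting~$\CS$ together with a world at which $A$ fails. I would do this by contrapositive, using exactly the Lindenbaum-style machinery that the canonical model has been designed to accommodate.

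First, from $\nvdash_\CS A$ I would conclude that $\{\neg A\}$ is $\CS$-consistent: otherwise $\neg A \proves_\CS \bot$, whence $\proves_\CS A$ by propositional reasoning and the Deduction Theorem. Next I would extend $\{\neg A\}$ to a maximal $\CS$-consistent set $w \subseteq \formulaeLPhEC$ by the usual Lindenbaum construction: fix an enumeration of $\formulaeLPhEC$, stage-by-stage add each formula whose addition preserves $\CS$-consistency, and take the union. The only properties needed are that $\CS$-consistency is of finite character and closed under propositional reasoning, both of which are immediate from the definition of $\proves_\CS$.

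Now I would form the canonical AF-model $\mathcal{M}$ meeting~$\CS$. By Lemma~\ref{canonicalmodelisfittingmodel} this is a genuine AF-model meeting~$\CS$, so the notion $\mathcal{M}, w \Vdash \cdot$ is well defined and the Truth Lemma (Lemma~\ref{truthlemma}) applies. Since $w$ is a world of $\mathcal{M}$ with $\neg A \in w$, maximal consistency gives $A \notin w$, so the Truth Lemma yields $\mathcal{M}, w \not\Vdash A$. Hence $\mathcal{M} \not\Vdash A$ for an AF-model meeting~$\CS$, i.e., $\not\Vdash_\CS A$, which is the contrapositive of the completeness direction.

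The genuinely hard work has already been absorbed into the preceding lemmas; the principal obstacle, namely verifying the monotonicity and truth conditions for the $\ck$-modality in the canonical model, was dealt with via the auxiliary claim~\eqref{ts:eq:mon:1} inside Lemma~\ref{canonicalmodelisfittingmodel} and reused inside the $\jbox{t}_\ck B$ case of the Truth Lemma. Consequently the completeness theorem itself reduces to the short assembly above, and no finite-model or filtration argument is needed at this stage (those enter only when one wishes to strengthen the result to singleton or finite models).
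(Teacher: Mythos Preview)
Your proposal is correct and matches the paper's own proof essentially step for step: invoke Lemma~\ref{soundness} for soundness, and for completeness argue by contrapositive that $\{\neg A\}$ is $\CS$-consistent, extend it to a maximal $\CS$-consistent world~$w$ of the canonical model, then combine Lemma~\ref{canonicalmodelisfittingmodel} with the Truth Lemma~\ref{truthlemma} to obtain $\mathcal{M}, w \nVdash A$ and hence $\nVdash_\CS A$. Your additional remarks about the Lindenbaum construction and about the heavy lifting being in~\eqref{ts:eq:mon:1} are accurate elaborations, but the underlying argument is the same.
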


\begin{proof}
Soundness has already been shown in Lemma~\ref{soundness}.
For completeness, let $\mathcal{M}$~be the canonical AF-model meeting~$\CS$ and assume
$\nvdash_\CS A$.
Then $\{ \neg A \}$~is $\CS$-consistent and hence is contained in some maximal $\CS$-consistent set
$w \in W$.
So, by Lemma~\ref{truthlemma},
$\mathcal{M}, w \Vdash \neg A$,
and hence, by Lemma~\ref{canonicalmodelisfittingmodel},
$\nVdash_\CS A$.
\qed
\end{proof}
	
\noindent
M-models were introduced as semantics for~$\LP$ by Mkrtychev~\cite{Mkr97LFCS}.
They form a subclass of F-models (see~\cite{Fit05APAL}).
\begin{definition}
An \emph{M-model} is a singleton AF-model.
\end{definition}

\begin{theorem}[Completeness with respect to M-models]
$\LPhEC(\CS)$~is also sound and complete with respect to the class of M-models meeting~$\CS$.
\end{theorem}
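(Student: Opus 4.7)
The plan is to mimic the canonical construction from Lemma~\ref{canonicalmodelisfittingmodel} and Lemma~\ref{truthlemma}, but on a singleton frame. Soundness is immediate, since every M-model is by definition an AF-model, so Lemma~\ref{soundness} applies directly.

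For completeness, suppose $\nvdash_\CS A$. First extend $\{\neg A\}$ to a maximal $\CS$-consistent set $w$ in the usual way. Then define a singleton M-model $\mathcal{M} = (\{w\}, R, \evidence, \valuation)$ by setting $R_i \colonequals \{(w,w)\}$ for each~$i$, $\evidence_\agentsEC(w,t) \colonequals \{B \in \formulaeLPhEC : \jbox{t}_\agentsEC B \in w\}$, and $\valuation(P) \colonequals \{w\}$ if $P \in w$ and $\varnothing$ otherwise. Each $R_i$ is reflexive and transitive, and consequently $R_\ek = R_\ck = \{(w,w)\}$.

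Next I would verify that this $\mathcal{M}$ is an AF-model meeting~$\CS$. Monotonicity is trivial on a singleton, and every remaining closure condition (constant specification, application, sum, inspection, tupling, projection, co-closure, induction) follows from the corresponding axiom of~$\LPhEC$ together with the maximal consistency of~$w$, by essentially the same argument as in Lemma~\ref{canonicalmodelisfittingmodel}.

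Finally, prove a truth lemma $B \in w \iff \mathcal{M}, w \Vdash B$ by induction on~$B$. The propositional cases are immediate. For $B = \jbox{t}_\agentsEC C$, the right-to-left direction follows from the definition of $\evidence_\agentsEC$. The left-to-right direction is the crucial step: if $\jbox{t}_\agentsEC C \in w$, then $C \in \evidence_\agentsEC(w,t)$ and, by $\agentsEC$-reflexivity (using Lemma~\ref{lem:basicproperties1}.\ref{lem:basicproperties1:Erefl} for $\agentsEC = \ek$ and the $\ck$-reflexivity part of Lemma~\ref{lem:basicproperties1} for $\agentsEC = \ck$), also $C \in w$; by the induction hypothesis $\mathcal{M}, w \Vdash C$, and since $w$ is the only accessible world we conclude $\mathcal{M}, w \Vdash \jbox{t}_\agentsEC C$. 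Applying the truth lemma to $\neg A \in w$ produces the required M-countermodel. The $\ck$-case is where the singleton setting pays off: in the multi-world canonical model it required the subtle argument~\eqref{ts:eq:mon:1} to circumvent the mismatch exhibited in Remark~\ref{nonstandardbehaviour}, but on a singleton both $R_\ck$ and its would-be rival $\{(u,v) : u/\ck \subseteq v\}$ collapse to $\{(w,w)\}$, so the $\ck$-case reduces to a direct application of $\ck$-reflexivity.
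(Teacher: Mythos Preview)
Your proposal is correct, and the singleton model you construct coincides with the one in the paper. The paper, however, takes a slightly different path: rather than building the singleton directly and proving a dedicated truth lemma for it, it first invokes the full canonical AF-model $\mathcal{M}$ and its Truth Lemma to obtain a world $w_0$ with $\mathcal{M}, w_0 \Vdash \neg A$, then restricts $\mathcal{M}$ to $\{w_0\}$ and proves $\mathcal{M}', w_0 \Vdash B \iff \mathcal{M}, w_0 \Vdash B$ by induction on~$B$, the key observation being that $\mathcal{M}, w_0 \Vdash \jbox{t}_\agentsEC C \iff C \in \evidence_\agentsEC(w_0, t)$ (an immediate consequence of the Truth Lemma and the definition of the canonical evidence function). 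Your route is more self-contained: you bypass the full canonical model entirely and establish membership-equals-truth directly on the singleton via $\agentsEC$-reflexivity, so you never need to appeal to the monotonicity argument~\eqref{ts:eq:mon:1}. The paper's route reuses more of the existing machinery; yours is arguably cleaner for this particular theorem.
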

\begin{proof}
Soundness follows immediately from Lemma~\ref{soundness}.
Now assume that
$\nvdash_\CS A$,
then $\{ \neg A \}$~is $\CS$-consistent, and hence
$\mathcal{M}, w \Vdash \neg A$
for some world
$w_0 \in W$
in the canonical AF-model
$\mathcal{M} = (W, R, \evidence, \valuation)$
meeting~$\CS$.

Let
$\mathcal{M}^\prime = (W^\prime, R^\prime, \evidence^\prime, \valuation^\prime)$
be the restriction of~$\mathcal{M}$ to~$\{ w_0 \}$,
i.e.,~$W^\prime \colonequals \{ w_0 \}$,
$R^\prime_\agentsEC \colonequals \{ (w_0, w_0) \}$
for any~$\agentsEC$,
$\evidence^\prime \colonequals \evidence \upharpoonright (W^\prime \times \eterms)$,
and
$\valuation^\prime(P_n) \colonequals \valuation(P_n) \cap W^\prime$.

Since $\mathcal{M}^\prime$~is clearly an M-model meeting~$\CS$, it remains to demonstrate that
$\mathcal{M}^\prime, w_0 \Vdash B$
if and only if
$\mathcal{M}, w_0 \Vdash B$
for all formulae~$B$.
We proceed by induction on the structure of~$B$.
The cases where  either $B$~is a propositional variable or its primary connective is propositional are trivial.
Therefore, we only show the case of
$B = \jbox{t}_\agentsEC C$.
First, observe that
\begin{equation}
\label{eq:mmodel:1}
    \mathcal{M}, w_0 \Vdash \jbox{t}_\agentsEC C \text{ if and only if } C \in \evidence^\prime_\agentsEC (w_0, t)
    \rlap{\enspace.}
\end{equation}
Indeed, by Lemma~\ref{truthlemma},
$\mathcal{M}, w_0 \Vdash \jbox{t}_\agentsEC C$
if and only if
$\jbox{t}_\agentsEC C \in w_0$,
which, by definition of the canonical AF-model, is equivalent to
$C \in \evidence_\agentsEC (w_0, t) = \evidence^\prime_\agentsEC (w_0, t)$.

If
$\mathcal{M}, w_0 \Vdash \jbox{t}_\agentsEC C$,
then
$\mathcal{M}, w_0 \Vdash C$
since $R_\agentsEC$~is reflexive.
By induction hypothesis,
$\mathcal{M}^\prime, w_0 \Vdash C$.
By~\eqref{eq:mmodel:1} we have
$C \in \evidence^\prime_\agentsEC (w_0, t)$,
and thus
$\mathcal{M}^\prime, w_0 \Vdash \jbox{t}_\agentsEC C$.

If
$\mathcal{M}, w_0 \nVdash \jbox{t}_\agentsEC C$,
then by~\eqref{eq:mmodel:1} we have
$C \notin \evidence^\prime_\agentsEC (w, t)$,
so
$\mathcal{M}^\prime, w_0 \nVdash \jbox{t}_\agentsEC C$.
\qed
\end{proof}

\begin{corollary}[Finite model property]
$\LPhEC(\CS)$~enjoys the finite model property with respect to AF-models.
\end{corollary}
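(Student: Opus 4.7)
The plan is to derive the finite model property as an essentially immediate consequence of completeness with respect to M-models, which was just established in the preceding theorem. The key observation is that an M-model is, by definition, a singleton AF-model, so in particular it has a finite (one-element) set of possible worlds. Thus completeness with respect to the class of M-models is already a form of the finite model property; it only remains to phrase it in the standard way.

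Concretely, I would argue as follows. Suppose $\nvdash_\CS A$. Then by the completeness theorem with respect to M-models, there exists an M-model $\mathcal{M}' = (W', R', \evidence', \valuation')$ meeting $\CS$ and a world $w_0 \in W'$ with $\mathcal{M}', w_0 \nVdash A$. Since $\mathcal{M}'$ is by definition a singleton AF-model, $|W'| = 1$, so $\mathcal{M}'$ is in particular a finite AF-model meeting $\CS$ that refutes $A$. Contrapositively, every formula valid in all finite AF-models meeting $\CS$ is provable in $\LPhEC(\CS)$, which together with soundness (Lemma~\ref{soundness}) yields the finite model property.

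I do not expect any genuine obstacle here: the entire content of the corollary is packaged into the preceding M-model completeness theorem, and the proof is essentially a one-line observation that singleton models are finite. The only thing worth being careful about is to note explicitly that the evidence function of an M-model, although defined on the infinite set $W' \times \eterms$, does not affect finiteness of the model in the sense relevant for the finite model property, which concerns the cardinality of the set of worlds.
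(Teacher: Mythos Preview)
Your proposal is correct and matches the paper's approach exactly: the corollary is stated without proof immediately after the M-model completeness theorem, the intended argument being precisely that M-models are singleton (hence finite) AF-models. Your added remark about the evidence function not affecting finiteness of the world set is a reasonable clarification but not strictly needed.
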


\section{Conservativity}
\label{sect:conserv}
	
Yavorskaya in~\cite{TYav08TOCSnonote} introduced a two-agent version of~$\LP$, which we extend to an arbitrary~$h$ in the natural way:
\begin{definition}
The language of\/~$\LPh$ is obtained from that of\/~$\LPhEC$ by restricting the set of operations to~$\cdot_i$, $+_i$, and\/~$!_i$ and by dropping all terms from\/~$\eterms_\ek$ and\/~$\eterms_\ck$.
The axioms are restricted to application, sum, reflexivity, and inspection for each~$i$.
The definition of constant specification is changed accordingly.
\end{definition}
We show that $\LPhEC$~is conservative over~$\LPh$ by adapting a technique from~\cite{Fit08AMAI}.
	
\begin{definition}
The mapping\/
$\times : \formulaeLPhEC \to \formulaeLPh$
is defined as follows:
\begin{enumerate}[leftmargin=*,topsep=0ex]
\item
    $P^\times \colonequals P$
    for propositional variables
    $P \in \propositions$;
\item
    $\times$~commutes with propositional connectives;
\item
    $(\jbox{t}_\agentsEC A)^\times \colonequals
    \begin{cases}
        A^\times                    & \text{if $t$~contains a subterm } s \in \eterms_\ek \cup \eterms_\ck, \\
        \jbox{t}_\agentsEC A^\times & \text{otherwise.}
    \end{cases}$
\end{enumerate}
\end{definition}
	
\begin{theorem}
Let $\CS$~be a constant specification for\/~$\LPhEC$.
For an arbitrary formula
$A \in \formulaeLPh$,
if\/
$\LPhEC(\CS) \proves A$,
then\/
$\LPh(\CS^\times) \proves A$.
\end{theorem}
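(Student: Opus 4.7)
The theorem asserts conservativity of $\LPhEC$ over $\LPh$. Since the translation $\times$ acts as the identity on $\formulaeLPh$, it suffices to establish the uniform claim that $\LPhEC(\CS) \proves B$ implies $\LPh(\CS^\times) \proves B^\times$ for every $B \in \formulaeLPhEC$, where $\CS^\times$ is the image under $\times$ of the individual-agent part of $\CS$, that is, essentially $\{\jbox{c}_\agents A^\times \mathrel{:} \jbox{c}_\agents A \in \CS,\ c \in \pconstants_\agents\}$. I would prove this by induction on the derivation of $B$ in $\LPhEC(\CS)$, case-splitting on the last rule applied.

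The observation driving the axiom cases is that $(\jbox{t}_\agentsEC A)^\times = A^\times$ whenever $t$ contains a subterm in $\eterms_\ek \cup \eterms_\ck$, which holds unconditionally for $\agentsEC \in \{\ek, \ck\}$ and conditionally for $\agentsEC = \agents$ (via, e.g., a projection). Consequently, the cross-level schemes (tupling, projection, co-closure, induction) and the $\ck$-instances of application and sum all translate to propositional tautologies or to trivial implications derivable from reflexivity. For the single-agent instances of application, sum, reflexivity, and inspection with $\agentsC = \agents$, a further case analysis on whether the terms involved carry an $\ek$/$\ck$ subterm shows the image is either the homonymous $\LPh$ axiom (when none do) or again a reflexivity-derivable implication (when some do).

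Modus ponens survives the translation because $\times$ commutes with $\rightarrow$. For the axiom necessitation step with $\jbox{c}_\agentsEC A \in \CS$: if $c \in \pconstants_\ek \cup \pconstants_\ck$, then the image collapses to $A^\times$, which is covered by the axiom cases above; if $c \in \pconstants_\agents$, the image is $\jbox{c}_\agents A^\times$, which belongs to $\CS^\times$ by construction and is therefore available by $\LPh$-axiom necessitation.

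The subtlest point, and the one I expect to demand the greatest care, is verifying that $\CS^\times$ is a bona fide $\LPh$-constant specification. When $\jbox{c}_\agents A \in \CS$ for a cross-level $\LPhEC$-axiom $A$ whose translation $A^\times$ is merely an $\LPh$-theorem rather than an $\LPh$-axiom (as with the images of, say, the tupling or induction axiom), the formula $\jbox{c}_\agents A^\times$ is not literally of the form demanded by the definition of constant specification. I would resolve this either by working with a mildly liberalized notion of constant specification for $\LPh$ (allowing justifications of arbitrary theorems) or by extending $\CS^\times$ with auxiliary constants witnessing the derivations of the relevant $A^\times$, so that $\jbox{c}_\agents A^\times$ can be reconstructed via a suitable composition of axiom necessitation, application, and reflexivity.
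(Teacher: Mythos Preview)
Your proposal is correct and follows essentially the same route as the paper: reduce to the uniform claim via $A^\times = A$ on $\formulaeLPh$, then induct on the $\LPhEC(\CS)$-derivation, handling each axiom scheme by a case split on whether the relevant terms carry an $\eterms_\ek \cup \eterms_\ck$ subterm, and treating modus ponens and axiom necessitation exactly as you describe. The paper likewise flags the $\CS^\times$ issue in a Remark (proposing either to enlarge the $\LPh$ axiom list or to use Fitting's embedding); note, however, that your second suggested fix---reconstructing $\jbox{c}_\agents A^\times$ for the \emph{given} constant $c$ by composing auxiliary justifications---cannot succeed as stated, since term operations yield some $\jbox{t}_\agents A^\times$ but not one headed by the specific constant $c$.
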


\begin{proof}	
Since
$A^\times = A$
for any
$A \in \formulaeLPh$,
it suffices to demonstrate that for any formula
$D \in \formulaeLPhEC$,
if
$\LPhEC(\CS) \proves D$,
then
$\LPh(\CS^\times) \proves D^\times$,
which can be done by induction on the derivation of~$D$.

\Case when $D$~is a propositional tautology, then so is~$D^\times$.

\Case when
$D = \jbox{t}_\agents B \rightarrow B$
is an instance of the reflexivity axiom.
Then $D^\times$~is either
$\jbox{t}_\agents B^\times \rightarrow B^\times$
or
$B^\times \rightarrow B^\times$,
i.e.,~an instance of the reflexivity axiom of~$\LPh$ or a propositional tautology respectively.

\Case when
$D = \jbox{t}_\agentsC (B \rightarrow C) \rightarrow (\jbox{s}_\agentsC B \rightarrow \jbox{t \cdot s}_\agentsC C)$
is an instance of the application axiom.
We distinguish the following possibilities:
\begin{enumerate}[leftmargin=*,topsep=0ex]
\item
    Both~$t$ and~$s$ contain a subterm from
    $\eterms_\ek \cup \eterms_\ck$.
    Then $D^\times$~has the form
    $(B^\times \rightarrow C^\times) \rightarrow (B^\times \rightarrow C^\times)$,
    which is a propositional tautology and, thus, an axiom of~$\LPh$.
\item
    Neither~$t$ nor~$s$ contains a subterm from
    $\eterms_\ek \cup \eterms_\ck$.
    Then $D^\times$~is an instance of the application axiom of~$\LPh$.
\item
    Term~$t$ contains a subterm from
    $\eterms_\ek \cup \eterms_\ck$
    while $s$~does not.
    Then $D^\times$~is
    $(B^\times \rightarrow C^\times) \rightarrow (\jbox{s}_\agents B^\times \rightarrow C^\times)$,
    which can be derived in~$\LPh(\CS^\times)$ from the reflexivity axiom
    $\jbox{s}_\agents B^\times \rightarrow B^\times$
    by propositional reasoning.
    In this case, translation~$\times$ does not map an axiom of~$\LPhEC$ to an axiom of~$\LPh$.
\item
    Term~$s$ contains a subterm from
    $\eterms_\ek \cup \eterms_\ck$
    while $t$~does not.
    Then $D^\times$~is
    $\jbox{t}_\agents (B^\times \rightarrow C^\times) \rightarrow (B^\times \rightarrow C^\times)$,
    an instance of the reflexivity axiom of~$\LPh$.
\end{enumerate}
		
\Case when
$D = \jbox{t}_\agentsC B \rightarrow \jbox{t + s}_\agentsC B$
is an instance of the sum axiom.
Then $D^\times$~becomes
$B^\times \rightarrow B^\times$,
$\jbox{t}_\agents B^\times \rightarrow B^\times$,
or
$\jbox{t}_\agents B^\times \rightarrow \jbox{t + s}_\agents B^\times$,
i.e.,~a propositional tautology, an instance of the reflexivity axiom of~$\LPh$, or an instance of the sum axiom of~$\LPh$ respectively.
The sum axiom
$\jbox{s}_\agentsC B \rightarrow \jbox{t + s}_\agentsC B$ is treated in the same manner.
		
\Case when
$D = \jbox{t}_\agents B \rightarrow \jbox{! t}_\agents \jbox{t}_\agents B$
is an instance of the inspection axiom.
Then $D^\times$~is either the propositional tautology
$B^\times \rightarrow B^\times$
or
$\jbox{t}_\agents B^\times \rightarrow \jbox{! t}_\agents \jbox{t}_\agents B^\times$,
an instance of the inspection axiom of~$\LPh$.

\Case when
$D = \jbox{t_1}_1 B \wedge \dots \wedge \jbox{t_\numberofagents}_\numberofagents B \rightarrow \jbox{\tupling{t_1, \dots, t_\numberofagents}}_\ek B$
is an instance of the tupling axiom.
We distinguish the following possibilities:
\begin{enumerate}[leftmargin=*,topsep=0ex]
\item
    At least one of the~$t_i$'s contains a subterm from
    $\eterms_\ek \cup \eterms_\ck$.
    Then $D^\times$~has the form
    $C_1 \wedge \dots \wedge C_\numberofagents \rightarrow B^\times$
    with at least one
    $C_i = B^\times$
    and is, therefore, a propositional tautology.
\item
    None of the~$t_i$'s contains a subterm from
    $\eterms_\ek \cup \eterms_\ck$.
    Then $D^\times$~has the form
    $\jbox{t_1}_1 B^\times \wedge \dots \wedge \jbox{t_\numberofagents}_\numberofagents B^\times \rightarrow  B^\times$,
    which can be derived in~$\LPh(\CS^\times)$ from the reflexivity axiom.
    This is another case when translation~$\times$ does not map an axiom of~$\LPhEC$ to an axiom of~$\LPh$.
\end{enumerate}

\Case when $D$~is an instance of the projection axiom
$\jbox{t}_\ek B \rightarrow \jbox{\projection_i t}_i B$
or of the co-closure axiom,
i.e.,~$\jbox{t}_\ck B \rightarrow \jbox{\cclhead{t}}_\ek B$
or
$\jbox{t}_\ck B \rightarrow \jbox{\ccltail{t}}_\ek \jbox{t}_\ck B$.
Then $D^\times$~is the propositional tautology
$B^\times \rightarrow B^\times$.
	
\Case when
$D = B \wedge \jbox{t}_\ck (B \rightarrow \jbox{s}_\ek B) \rightarrow \jbox{\ind{t,s}}_\ck B$
is an instance of the induction axiom.
Then $D^\times$~is
$B^\times \wedge (B^\times \rightarrow B^\times) \rightarrow B^\times$, a propositional tautology.

\Case when $D$~is derived by modus ponens is trivial.
	
\Case when $D$~is
$\jbox{c}_\agentsEC B \in \CS$.
Then $D^\times$~is either~$B^\times$
or~$\jbox{c}_\agents B^\times$.
In the former case, $B$~is an axiom of~$\LPhEC$, and hence $B^\times$~is derivable in~$\LPh(\CS^\times)$, as shown above; in the latter case,
$\jbox{c}_\agents B^\times \in \CS^\times$.
\qed
\end{proof}

\begin{remark}
Note that $\CS^\times$~need not, in general, be a constant specification for~$\LPh$ because, as noted above, for an axiom~$D$ of~$\LPhEC$, its image~$D^\times$ is not always an axiom of~$\LPh$.
To ensure that $\CS^\times$~is a proper constant specification,
$(A \rightarrow B) \rightarrow (\jbox{s}_\agents A \rightarrow B)$
and
$\jbox{t_1}_1 A \wedge \dots \wedge \jbox{t_\numberofagents}_\numberofagents A \rightarrow A$
have to be made axioms of~$\LPh$.
Another option is to use Fitting's concept of \emph{embedding} one justification logic into another, which involves replacing constants in~$D$ with more complicated terms in~$D^\times$ (see~\cite{Fit08AMAI} for details).
\end{remark}
	
\section{Forgetful Projection and a Word on Realization}
\label{sect:real}
	
Most justification logics are introduced as explicit counterparts to particular modal logics in the strict sense described in Sect.~\ref{sec:intro}.
Although the realization theorem for~$\LPhEC$ remains an open problem, in this section we prove that each theorem of our logic~$\LPhEC$ states a valid modal fact if all terms are replaced with the corresponding modalities, which is one direction of the realization theorem.
We also discuss approaches to the harder opposite direction.

We start with recalling the modal language of common knowledge.
Modal formulae are defined by the following grammar
\begin{equation*}
    A \coloncolonequals P_j \dashspace \neg A \dashspace (A \wedge A) \dashspace (A \vee A) \dashspace (A \rightarrow A) \dashspace \Box_i A \dashspace \ek A \dashspace \ck A
    \rlap{\enspace,}
\end{equation*}
where $P_j \in \propositions$.
The set of all modal formulae is denoted by~$\formulaeSFourhEC$.
	
The Hilbert system~$\SFourhEC$~\cite{HM95} is given by the modal axioms of~$\SFour$ for individual agents, by the necessitation rule for
$\Box_1, \dots, \Box_\numberofagents$,
and~$\ck$, by modus ponens, and by the axioms
\begin{gather*}
    \ck (A \rightarrow B) \rightarrow (\ck A \rightarrow \ck B),
    \qquad
    \ck A \rightarrow A,
    \qquad
    \ek A \leftrightarrow \Box_1 A \wedge \dots \wedge \Box_\numberofagents A, \\
	A \wedge \ck (A \rightarrow \ek A) \rightarrow \ck A,
    \qquad\qquad
    \ck A \rightarrow \ek (A \wedge \ck A).
\end{gather*}
	
\begin{definition}[Forgetful projection]
\label{def:forgetfulprojection}
The mapping\/
$\forgetful \colon \formulaeLPhEC \rightarrow \formulaeSFourhEC$
is defined as follows:
\begin{enumerate}[topsep=0ex]
\item
    $P^\forgetful \colonequals P$
    for propositional variables
    $P \in \propositions$;
\item
    $\forgetful$~commutes with propositional connectives;
\item
    $(\jbox{t}_i A)^\forgetful \colonequals \Box_i A^\forgetful$;
\item
    $(\jbox{t}_\ek A)^\forgetful \colonequals \ek A^\forgetful$;
\item
    $(\jbox{t}_\ck A)^\forgetful \colonequals \ck A^\forgetful$.
\end{enumerate}
\end{definition}
		
\begin{lemma}
\label{l:fp:1}
Let $\CS$~be any constant specification.
For any formula
$A \in \formulaeLPhEC$,
if\/
$\LPhEC(\CS) \proves A$,
then\/
$\SFourhEC \proves A^\forgetful$.
\end{lemma}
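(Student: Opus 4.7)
The plan is to proceed by straightforward induction on the length of a derivation of~$A$ in\/~$\LPhEC(\CS)$, verifying for each axiom schema that its $\forgetful$-image is a theorem of\/~$\SFourhEC$ and checking that the two rules of inference are preserved under~$\forgetful$. Since $\forgetful$ commutes with the propositional connectives, the cases of propositional tautologies and of modus ponens are immediate, and the axiom necessitation rule
$\jbox{c}_\agentsEC B \in \CS$
is handled by first proving $B^\forgetful$ (which we will have already done, since $B$ is one of the axioms listed below) and then applying the necessitation rule of\/~$\SFourhEC$ for the modality corresponding to~$\agentsEC$; for $\agentsEC = \ek$ we combine the individual necessitations for $\Box_1, \dots, \Box_\numberofagents$ with the axiom
$\ek B^\forgetful \leftrightarrow \Box_1 B^\forgetful \wedge \dots \wedge \Box_\numberofagents B^\forgetful$.

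For the axioms, I would check case by case that the $\forgetful$-translation lands in a well-known schema of\/~$\SFourhEC$. The application axiom for $\agentsC = i$ maps to the K~axiom for~$\Box_i$, and for $\agentsC = \ck$ it maps to the listed K~axiom for~$\ck$. The sum axiom collapses to a trivial implication $\Box_\agentsC B^\forgetful \to \Box_\agentsC B^\forgetful$. Reflexivity and inspection for agent~$i$ produce the T and~4 axioms for~$\Box_i$, which are available since each $\Box_i$ is an $\SFour$-modality. The tupling axiom becomes $\Box_1 B^\forgetful \wedge \dots \wedge \Box_\numberofagents B^\forgetful \to \ek B^\forgetful$ and the projection axiom becomes $\ek B^\forgetful \to \Box_i B^\forgetful$; both are direct consequences of the biconditional defining~$\ek$. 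The two co-closure axioms translate to $\ck B^\forgetful \to \ek B^\forgetful$ and $\ck B^\forgetful \to \ek \ck B^\forgetful$, each of which follows from the listed co-closure axiom $\ck A \to \ek(A \wedge \ck A)$ together with normality of~$\ek$ (which in turn uses the $\ek$-biconditional and the K~axioms for the~$\Box_i$'s). Finally, the induction axiom maps exactly onto the modal induction axiom $A^\forgetful \wedge \ck(A^\forgetful \to \ek A^\forgetful) \to \ck A^\forgetful$.

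There is no real obstacle here: the lemma is essentially a bookkeeping exercise showing that the modal axiomatization of\/~$\SFourhEC$ was chosen to match the justification axiomatization schema by schema. The only points that require a little care, rather than a one-line verification, are the co-closure axioms, where one has to unpack $\ek(A \wedge \ck A)$ into its two conjuncts using normality of~$\ek$, and the axiom necessitation rule, where one must remember that necessitation for~$\ek$ is not primitive but is derived via the $\ek$-biconditional from the necessitation rules for the individual~$\Box_i$. \qed
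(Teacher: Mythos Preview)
Your proposal is correct and follows exactly the approach the paper indicates, namely induction on the derivation of~$A$; the paper itself dismisses the proof as ``easy induction'' without spelling out any of the cases, so your case-by-case verification simply fills in the details that the authors omitted.
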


\begin{proof}
The proof is by easy induction on the derivation of~$A$.
\qed
\end{proof}

\begin{definition}[Realization]
A realization is a mapping
$r \colon \formulaeSFourhEC \to \formulaeLPhEC$
such that\/
$(r(A))^\forgetful = A$.
We usually write~$A^r$ instead of\/~$r(A)$.
\end{definition}
	
We can think of a realization as a function that replaces occurrences of modal operators (including~$\ek$ and~$\ck$) with evidence terms of the corresponding type.
The problem of realization for a given pure $\ck$-axiomatically appropriate constant specification~$\CS$ can be stated as follows:
\[
    \text{Is there a realization~$r$ such that
    $\LPhEC(\CS) \proves A^r$
    for any theorem~$A$ of~$\SFourhEC$?}
\]
A positive answer to this question would constitute the harder direction of the realization theorem, which is often demonstrated using induction on a cut-free sequent proof of the modal formula.

Cut-free systems  for~$\SFourhEC$  are presented in~\cite{aj05} and~\cite{bs09}.
They are based on an infinitary $\omega$-rule of the form
\[
    \hilbertrule{\ek^m A, \Gamma \quad \text{for all } m \geq 1}{\ck A, \Gamma} \qquad (\omega).
\]
However, realization of such a rule meets with serious difficulties in reaching uniformity among the realizations of the approximants~$\ek^m A$.

A finitary cut-free system is obtained in~\cite{jks07} by finitizing this $\omega$-rule via the finite model property.
Unfortunately, the ``somewhat unusual'' structural properties of the resulting system (see discussion in~\cite{jks07}) make it hard to use it for realization.

The non-constructive, semantic realization method from~\cite{Fit05APAL} cannot be applied directly because of the non-standard behavior of the canonical model (see Remark~\ref{nonstandardbehaviour}).

Perhaps the infinitary system  presented in~\cite{BucKuzStu09M4M}, which is finitely branching but admits infinite branches, can help in proving the realization theorem for~$\LPhEC$.
For now this remains work in progress.

\section{Coordinated attack}
\label{s:attack:1}

To illustrate our logic, we will now analyze the coordinated attack problem along the lines of~\cite{FHMV95}, where additional references can be found.
Let us briefly recall this classical problem.
Suppose two divisions of an army, located in different places, are about to attack an enemy.
They have some means of communication, but these may be unreliable, and the only way to secure a victory is to attack simultaneously.
How should generals~$G$ and~$H$ who command the two divisions coordinate their attacks?
Of course, general~$G$ could send a message~$m_1^G$ with the time of attack to general~$H$.
Let us use the proposition~$\tsd$ to denote the fact that the message with the time of attack has been delivered.
If the generals trust the authenticity of the message, say because of a signature, the message itself can be taken as evidence that it has been delivered.
So general~$H$, upon receiving the message, knows the time of attack,
i.e.,~$\jbox{m_1^G}_H \tsd$.
However, since communication is unreliable, $G$~considers it possible that his message has not been delivered.
But if general~$H$ sends an acknowledgment~$m_2^H$, he in turn cannot be sure whether the acknowledgment has reached~$G$, which prompts yet another acknowledgment~$m_3^G$ by general~$G$, and so on.

In fact, common knowledge of~$\tsd$ is a necessary condition for the attack.
Indeed, it is reasonable to assume it to be common knowledge between the generals that they should only attack simultaneously or not attack at all, i.e.,~that they attack only if both know that they attack:
$\jbox{t}_\ck (\attack \rightarrow \jbox{s}_\ek \attack)$
for some terms~$s$ and~$t$.
Thus, by the induction axiom, we get
$\attack \rightarrow \jbox{\ind{t,s}}_\ck \attack$.
Another reasonable assumption is that it is common knowledge that neither general attacks unless the message with the time of attack has been delivered:
$\jbox{r}_\ck (\attack \rightarrow \delivered)$
for some term~$r$.
Using the application axiom, we obtain
$\attack \rightarrow \jbox{r \cdot \ind{t,s}}_\ck \delivered$.

We now show that common knowledge of~$\tsd$ cannot be achieved and that, therefore, no attack will take place, no matter how many messages and acknowledgments~$m_1^G$, $m_2^H$, $m_3^G$,~\dots are sent by the generals even if all the messages are successfully delivered.

In the classical modeling without evidence, the reason is that the sender of the last message always considers the possibility that his last message, say~$m_{2k}^H$, has not been delivered.
To give a flavor of the argument carried out in detail in~\cite{FHMV95}, we provide a countermodel where $m_2^H$~is the last message, it has been delivered, but $H$~is unsure of that,
i.e.,~$\jbox{m_1^G}_H {\tsd}$,
$\jbox{m_2^H}_G {\jbox{m_1^G}_H {\tsd}}$,
but
$\lnot \jbox{s}_H \jbox{m_2^H}_G {\jbox{m_1^G}_H {\tsd}}$
for all terms~$s$.
Indeed, consider the model~$\tsm$ with
$W \colonequals \{ 0, 1, 2, 3 \}$,
$\tsval(\tsd) \colonequals \{ 0, 1, 2 \}$,
$R_G$~being the reflexive closure of~$\{ (1, 2) \}$, $R_H$~being the reflexive closure of~$\{ (0, 1), (2, 3) \}$, and any evidence function~$\evidence$ such that
$\tsd \in \evidence_H (0, m_{1}^G)$
and
$\jbox{m_1^G}_H {\tsd} \in \evidence_G (0, m_{2}^H)$.
Then, whatever $\evidence_\ck$~is, we have
$\mathcal{M}, 0 \nVdash \jbox{s}_H \jbox{m_2^H}_G {\jbox{m_1^G}_H {\tsd}}$
and
$\mathcal{M}, 0 \nVdash \jbox{t}_\ck \tsd$
for any~$s$ and~$t$ because
$\mathcal{M}, 3 \nVdash \tsd$.

In our models with explicit evidence, there is an alternative possibility for the lack of knowledge: the absence of evidence.
For example, $G$~may receive the acknowledgment~$m_2^H$ but not consider it to be evidence for
$\jbox{m_1^G}_H \tsd$
because the signature of~$H$ is missing.

We now demonstrate that common knowledge of the time of attack cannot emerge, basing the argument solely on the lack of common knowledge evidence.
A corresponding M-model
$\tsm = (W, R, \evidence, \valuation)$
is obtained as follows:
$W \colonequals \{ w \}$,
$R_i \colonequals \{ (w, w) \}$,
$\tsval(\tsd) \colonequals \{ w \}$,
and $\tse$~is the minimal evidence function such that
$\tsd \in \evidence_H (w, m_{1}^G)$
and
$\jbox{m_1^G}_H{\tsd} \in \evidence_G (w, m_{2}^H)$.
In this model
$M, w \nVdash \jbox{t}_\ck \tsd$
for any evidence term~$t$ because
$\tsd \notin \evidence_\ck (w, t)$
for any~$t$.
To show the latter statement, note that for any term~$t$, by Lemma~\ref{l:fp:1},
\begin{equation}
\label{eq:coordatt}
    \nvdash \jbox{m_1^G}_H \tsd \wedge \jbox{m_2^H}_G \jbox{m_1^G}_H{\tsd} \rightarrow \jbox{t}_\ck \tsd
\end{equation}
because
\[
    \SFourhEC \nvdash \Box_H \tsd \wedge \Box_G \Box_H \tsd \rightarrow \ck \tsd
    \rlap{\enspace,}
\]
which is easy to demonstrate.
Thus, the negation of the formula from~\eqref{eq:coordatt} is satisfiable, and for each~$t$ there is a world~$w_t$ in the canonical AF-model with evidence function~$\tsecan$ such that
$\tsd \in \tsecan_H (w_t, m_1^G)$
and
$\jbox{m_1^G}_H \tsd \in \tsecan_G (w_t, m_2^H)$,
but by the Truth Lemma~\ref{truthlemma},
$\tsd \notin \tsecan_\ck (w_t, t)$.
Since
$\tsecan \upharpoonright (\{ w_t \} \times \eterms)$
satisfies all the closure conditions, minimality of~$\tse$ implies that
$\tse_\ck (w, s) \subseteq \tsecan_\ck (w_t, s)$
for any term~$s$.
In particular,
$\tsd \notin \evidence_\ck (w, t)$
for any term~$t$.

\section{Conclusions}
\label{sect:concl}

We have presented an explicit evidence system~$\LPhEC$ with common knowledge, which is a conservative extension of the multi-agent explicit evidence logic~$\LPh$.
The major open problem  at the moment remains proving the realization theorem, one direction of which we have demonstrated.

Our analysis of the coordinated attack problem in the language of~$\LPhEC$ shows that access to explicit evidence creates more alternatives than the classical modal approach.
In particular, the lack of knowledge can  occur either because messages are not delivered or because evidence of authenticity is missing.

We have mostly concentrated on the study of $\ck$-axiomatically appropriate constant specifications.
For modeling distributed systems with different reasoning capabilities of agents, it is also interesting to consider $\agents$-axiomatic appropriate, $\ek$-axiomatic appropriate, and mixed constant specifications, where only certain aspects of reasoning are common knowledge.

We established soundness and completeness with respect to AF-models and singleton M-models.
Can other semantics for justification logics such as (arithmetical) provability semantics~\cite{Art95TR,Art01BSL} and game semantics~\cite{Ren09ICnonote} be adapted to~$\LPhEC$?

There are further interesting questions: Is $\LPhEC$~decidable and, if yes, what is its complexity compared to that of~$\SFourhEC$?
How robust is our treatment of common knowledge if the individual modalities are taken to be of type~$\textsf{K}$, $\textsf{K5}$,~etc.?

\phantomsection
\addcontentsline{toc}{section}{References}
\bibliographystyle{alphaurl}
\bibliography{bibliography,JLBibliography}

\end{document}